\begin{document}

\title{On Performance of Multiscale Sparse Fast Fourier Transform Algorithm}


\author{Bin Li$^{1}$, Zhikang Jiang$^{1}$ and Jie Chen$^{1}$}


\institute{Bin Li \at
              \email{sulibin@shu.edu.cn}           
           \and
           Zhikang Jiang \at
              \email{zkjiang@i.shu.edu.cn}  
           \and
           Jie Chen \at
              \email{jane.chen@shu.edu.cn}  
           \and
           $^{1}$ \quad School of Mechanical and Electrical Engineering and Automation, Shanghai University, Shanghai 200072, China
}

\date{Received: date / Accepted: date}

\maketitle

\begin{abstract}

Computing the Sparse Fast Fourier Transform(sFFT) of a $K$-sparse signal of size $N$ has emerged as a critical topic for a long time. The sFFT algorithms decrease the runtime and sampling complexity by taking advantage of the signal's inherent characteristics that a large number of signals are sparse in the frequency domain(e.g., sensors, video data, audio, medical image, etc.). The first stage of sFFT is frequency bucketization through one of these filters: Dirichlet kernel filter, flat filter, aliasing filter, etc. Compared to other sFFT algorithms, the sFFT algorithms using the flat filter is more convenient and efficient because the filtered signal is concentrated both in the time domain and frequency domain. Up to now, three sFFT algorithms sFFT1.0, sFFT2.0, sFFT3.0 algorithm have been proposed by the Massachusetts Institute of Technology(MIT) in 2013. Still, the sFFT4.0 algorithm using the multiscale approach method has not been implemented yet. This paper will discuss this algorithm comprehensively in theory and implement it in practice. It is proved that the performance of the sFFT4.0 algorithm depends on two parameters. The runtime and sampling complexity are in direct ratio to the multiscale parameter and in inverse ratio to the extension parameter. The robustness is in direct ratio to the extension parameter and in inverse ratio to the multiscale parameter. Compared with three similar algorithms or other four types of algorithms, the sFFT4.0 algorithm has excellent runtime and sampling complexity that ten to one hundred times better than the fftw algorithm, although the robustness of the algorithm is medium.

\keywords{Sparse Fast Fourier Transform(sFFT) \and flat window filter \and sub-linear algorithms \and multiscale approach}

\end{abstract}

\section{Introduction}
\label{intro}
The widely popular algorithm to compute Discrete Fourier Transform (DFT) is the fast Fourier transform(FFT) invented by Cooley and Tukey, which can compute a signal of size $N$ in $O(N\text{log}N)$ time and use $O(N)$ samples. With the demand for low sampling ratio and big data computing, it motivates the new algorithms to replace the previous FFT algorithms that can compute DFT from a subset of the input data in sub-linear time. The new sFFT algorithms can reconstruct the spectrum with high accuracy by using only $K$ most significant frequencies. In terms of its excellent performance and generally satisfied assumptions, the technology of sFFT was named one of the ten Breakthrough Technologies in MIT Technology Review in 2012.

There are mainly two stages in the sFFT: frequency bucketization and spectrum reconstruction. Frequency bucketization is equivalent to hashing the frequency coefficients into $B(\approx  K)$ buckets through filters. The Dirichlet kernel filter only bins some frequency coefficients into one bucket one time. The aliasing filter looks like a comb is difficult to solve the worst case because there may be many frequency coefficients in the same bucket accidentally. The flat filter can be obtained by convoluted a Gaussian function with a box car window function. It is concentrated both in the time domain and frequency domain. After frequency bucketization, for the filtered signal, the sFFT algorithms need to reconstruct the spectrum by their unique method through their own framework. 

The sFFT algorithms using the Dirichlet kernel window filter is a randomized algorithm. The performance of the Ann Arbor fast Fourier transform(AAFFT0.5 \cite{IEEEexample:Gilbert2002Near}) algorithm was later improved in the AAFFT0.9 \cite{IEEEexample:Iwen2007Empirical}, \cite{IEEEexample:Gilbert2008A} algorithm through the use of unequally-spaced FFTs and binary search technique for spectrum reconstruction.
	
There are three frameworks for the sFFT algorithms using the aliasing window filter. The algorithms of the one-shot framework based on the compressed sensing solver are the so-called sFFT by downsampling in the time domain(sFFT-DT1.0 \cite{IEEEexample:Hsieh2013}, sFFT-DT2.0 \cite{IEEEexample:Hsieh2015}) algorithm. The algorithms of the peeling framework based on the bipartite graph are the so-called Fast Fourier Aliasing-based Sparse Transform(FFAST) \cite{IEEEexample:Pawar2013}, \cite{IEEEexample:Pawar2018} and R-FFAST \cite{IEEEexample:Pawar2015}, \cite{IEEEexample:Ong2019} algorithm. The algorithm of the iterative framework based on the binary tree search is the so-called Deterministic Sparse FFT(DSFFT \cite{IEEEexample:Plonka2018}) algorithm. Under the assumption of arbitrary sampling, the Gopher Fast Fourier Transform(GFFT) \cite{IEEEexample:Iwen2010}, \cite{IEEEexample:Iwen2013} algorithm and the Christlieb Lawlor Wang Sparse Fourier Transform(CLW-SFT) \cite{IEEEexample:LAWLOR2013}, \cite{IEEEexample:Christlieb2016} algorithm are aliasing-based search deterministic algorithm guided by the Chinese Remainder Theorem(CRT). The DMSFT \cite{IEEEexample:Merhi2019}(generated from GFFT) algorithm and CLW-DSFT \cite{IEEEexample:Merhi2019}(generated from CLW-SFT) algorithm use the multiscale error-correcting method to cope with noise.

There are two frameworks for the sFFT algorithms using the flat window filter. The algorithms of the one-shot framework are the sFFT1.0 \cite{IEEEexample:Hassanieh2012} and sFFT2.0 \cite{IEEEexample:Hassanieh2012} algorithm which can locate and estimate the $K$ largest coefficients in one shot by multiple random bucketization. The algorithms of the iterative framework are the sFFT3.0 \cite{IEEEexample:Hassanieh2012-2} and sFFT4.0 \cite{IEEEexample:Hassanieh2012-2} algorithm and etc. The sFFT3.0 algorithm can locate the position by using only two filtered signals inspired by the frequency offset estimation in the exactly sparse case. The sFFT4.0 algorithm introduced in this paper can locate the position block by block inspired by the multiscale frequency offset estimation in the generally sparse case. The new robust algorithm, so-called the Matrix Pencil FFT(MPFFT) \cite{IEEEexample:Chiu2014} algorithm, was proposed based on the sFFT3.0 algorithm. The paper \cite{IEEEexample:Li2020} summarizes the two frameworks and five reconstruction methods of these five corresponding algorithms.

As is shown in Table 1, the theoretical performance analysis of all the above algorithms can be seen. The paper\cite{IEEEexample:Gilbert2014} summarizes a three-step approach in the stage of spectrum reconstruction and provides a standard testing platform to evaluate different sFFT algorithms. There are also some researches try to conquer the sFFT problem from other aspects: complexity \cite{IEEEexample:Indyk2014}, \cite{IEEEexample:Kapralov2017}, performance \cite{IEEEexample:Chen2017}, \cite{IEEEexample:Lopez-Parrado2015}, software \cite{IEEEexample:Wang2016}, \cite{IEEEexample:Schumacher2014}, hardware \cite{IEEEexample:Abari2014}, higher dimensions \cite{IEEEexample:Kapralov2019}, \cite{IEEEexample:Wang2019}, implementation \cite{IEEEexample:Kumar2019}, \cite{IEEEexample:Pang2018} and special setting \cite{IEEEexample:Plonka2017}, \cite{IEEEexample:Plonka2016} perspectives.

This paper is structured as follows. Section 2 provides a brief overview of some notation and basic definitions that we will use in the sFFT. Section 3 introduces and analyzes the multiscale Sparse Fast Fourier Transform Algorithm in detail from five aspects: the overall flow of the algorithm, the steps of frequency reconstruction by the multiscale approach in one iteration, the performance of the algorithm in theory, the comparison with other algorithms. In Section 4, we do three categories of comparison experiments. The first is the experiments with different parameters. The second is to compare the algorithm with similar sFFT algorithms using the same flat filter. The third is to compare the algorithm with other types of sFFT algorithms. The analysis of the experiment results satisfies the inferences obtained in theory.

\section{Preliminaries}
In this section, we start with an overview of some notation and basic definitions that we will use in the sFFT.

\subsection{Notation}
The $N$-th root of unify is denoted by $\omega _{N}=e^{-2\pi \mathbf{i}/N}$. The DFT matrix of size $N$ is denoted by $\mathbf{F}_{N}\in \mathbb{C}^{N\times N}$ as follows:
\begin{equation}
\mathbf{F}_{N}[j,k]=\frac{1}{N}\omega _{N}^{jk}	\label{Eq1}
\end{equation}

The DFT of a vector $x\in \mathbb{C}^{N}$ (consider a signal of size $N$) is a vector $\hat{x}\in \mathbb{C}^{N}$ defined as follows:
\begin{equation}
\begin{split}
\hat{x}=\mathbf{F}_{N}x	\\
\hat{x}_{i}=\frac{1}{N} \sum_{j=0}^{N-1}x_{j}\omega _{N}^{ij}	
\end{split}	\label{Eq2}
\end{equation}

For $x_{-i}=x_{N-i}$, the convolution is defined as follows::
\begin{equation}
(x\ast y)_{i}=\sum_{j=0}^{N-1}x_{j}y_{i-j} \label{Eq3}
\end{equation}

For coordinate-wise product $(xy)_{i}=x_{i}y_{i}$, the DFT of $xy$ is performed as follows:
\begin{equation}
\widehat{xy}=\hat{x}\ast \hat{y} \label{Eq4}
\end{equation}

In the exactly sparse case, spectrum $\hat x$ is exactly $K$-sparse if it has exactly $K$ non-zero frequency coefficients while the remaining $N-K$ coefficients are zero. In the general sparse case, spectrum $\hat x$ is general $K$-sparse if it has $K$ significant frequency coefficients while the remaining $N-K$ coefficients are negligible. The goal of sFFT is to recover a $K$-sparse approximation $\hat {x'}$ by locating $K$ frequency positions $f_0, \dots, f_{K-1}$ and estimating $K$ largest frequency coefficients $\hat{x}_{f_{0}},\dots,\hat{x}_{f_{K-1}}$.

\subsection{The Technique of Random Spectrum Permutation}
	The first technique used in sFFT is the spectrum random permutation, including two operations: one is shift operation, another is scaling operation. The offset parameter is denoted by $\tau \in \mathbb{R}$. The matrix representing the shift operation is denoted by $\mathbf{S}_{\tau}\in \mathbb{R}^{N\times N} $ as follows:
\begin{equation}
\mathbf{S}_{\tau}[j,k]=\left\{\begin{matrix}  1, & j-\tau\equiv k(\text{mod} N) 
\\0,&\text{o.w.}
\end{matrix}\right.\label{Eq5}
\end{equation}

 The scaling parameter is denoted by $\sigma \in \mathbb{R}$. The matrix representing the scaling operation is denoted by $\mathbf{P}_{\sigma}\in \mathbb{R}^{N\times N} $ as follows:
\begin{equation}
\mathbf{P}_{\sigma}[j,k]=\left\{\begin{matrix}  1, &  \sigma {j} \equiv k(\text{mod} N)
\\0,&\text{o.w.} 
\end{matrix}\right.\label{Eq6}
\end{equation}

Suppose $\sigma^{-1} \in \mathbb{R}$ exists mod $N$, $\sigma^{-1}$ satisfies $\sigma^{-1}\sigma\equiv{1}(\text{mod}N)$. If a vector $x'\in \mathbb{C}^{N}, \ x'=\mathbf{S}_{\tau} \mathbf{P}_{\sigma} x$, such that:
\begin{equation}
\begin{split}
x'_i=x_{\sigma(i-\tau)}  \\
x'_{\sigma^{-1}i+\tau}=x_i
\end{split}\label{Eq7}
\end{equation}

According to the time shift property, the DFT of a random permutation signal is performed as follows: If $x'=\mathbf{S}_{\tau} \mathbf{P}_{\sigma} x$,  such that: 
\begin{equation}
\begin{split}
\hat{x'}_{\sigma i}=\hat{x}_{i}\omega ^{\sigma \tau i} \\
\hat{x'}_{i}=\hat{x}_{\sigma^{-1}i}\omega ^{\tau i}
\end{split}\label{Eq8}
\end{equation}

From Eq. (\ref{Eq8}), we can see the technique of random permutation isolates spectral components from each other.

\subsection{The Technique of Window Function}
The second technique used in sFFT is the window function which is an important mathematical tool that can be seen as a matrix multiply the original signal. We introduce two filters used in the sFFT algorithm mentioned in this paper.

The first filter is the flat window filter. We use a vector $G$ that is concentrated both in time and frequency domain, $G$ is zero except at a small number of time coordinates with supp$(G)\subseteq  [-w/2,w/2]$ and its Fourier Transform $\hat{G}$ is negligible except at a small fraction $L$ $(\approx {\varepsilon N})$ of the frequency coordinates (the pass region). The paper \cite{IEEEexample:Hassanieh2012} claim there exists a standard window function $G(\varepsilon, \ \varepsilon', \ \delta, \ w)$ satisfies Eq. (\ref{Eq9}). The filter can be obtained by convoluted a Gaussian function with a boxcar window function and supp$(G) = w = O(1/\varepsilon \ \text{log}(1/\delta)))$. By knowing these, we define filter $G \in \mathbb{C}^N$  be an $(L/N, \ L/2N, \ \delta, \ w)$ flat window. The width of the filter in the time domain is denoted by $w$, the width of the passband region in the frequency domain is denoted by $L$, the number of buckets is denoted by $B$ and $B = N/L$. 
\begin{equation}
\begin{aligned}
\left | \hat{G}_i  \right | \in [1-\delta ,1+\delta] \ for \ i\in [-\epsilon 'N,\epsilon 'N]\\
 \left |\hat{G}_i  \right |\in [0 ,\delta] \  for\  i\notin  [-\epsilon N,\epsilon N]\\
 \left | \hat{G}_i  \right | \in [0 ,1] \  for \left | i \right | \in[\epsilon 'N, \epsilon N ]
\end{aligned} \label{Eq9}
\end{equation}

The diagonal matrix whose diagonal entries represent filter coefficients in the time domain is denoted by matrix $\mathbf{Q}_{L}\in \mathbb{C}^{N\times N}$ as follows:
\begin{equation}
\mathbf{Q}_{L}[j,k]=\left\{\begin{matrix}  G_j, & j=k\\0,&\text{o.w.} 
\end{matrix}\right. \label{Eq10}
\end{equation}

The second filter is the frequency subsampled filter. Through the filter, the signal in the time domain is aliased, such that the corresponding signal in the frequency domain is subsampled. The matrix representing the aliasing operator is denoted by $\mathbf{U}_{L}\in \mathbb{R}^{B\times N}$ as follows:
\begin{equation}
\mathbf{U}_{L}[j,k]=\left\{\begin{matrix}  1, & j-k\equiv 0(\text{mod} B) 
\\0,&\text{o.w.}
\end{matrix}\right. \label{Eq11}
\end{equation}

The filtered signal obtained by the subsampled filter is denoted by $y_L=\mathbf{U}_{L}x$. According to the aliasing characteristic, the DFT of a filtered signal $\hat{y}_L=\mathbf{F}_{B}\mathbf{U}_{L}x$ is performed as follows:
\begin{equation}
\hat{y}_L[i]=\hat{x}[iL] \label{Eq12}
\end{equation}

\subsection{The Technique of Frequency Bucketization}
The process of frequency bucketization in this paper is achieved through two techniques mentioned above. It can be divided into the following three steps: random spectrum permutation, multiply flat window filter, Fourier transform. It can be equivalent to the signal multiply $\mathbf{F}_{B}\mathbf{U}_{L}\mathbf{Q}_{L}\mathbf{S}_{\tau} \mathbf{P}_{\sigma}$ and the filtered signal in each bucket is performed as follows: 
\begin{lemma} 
If $y_{L,\tau,\sigma}=\mathbf{U}_{L}\mathbf{Q}_{L}\mathbf{S}_{\tau} \mathbf{P}_{\sigma}x$, and $\hat{y}_{L,\tau,\sigma}=\mathbf{F}_{B}\mathbf{U}_{L}\mathbf{Q}_{L}\mathbf{S}_{\tau} \mathbf{P}_{\sigma}x$, such that: 
\begin{equation}
\begin{split}
\hat{y}_{L, \tau, \sigma}[0] \approx
 &\hat{G}_{\frac{L}{2}} \hat{x}_{\sigma^{-1}\left(-\frac{L}{2}\right)} \omega_{N}^{\tau\left(-\frac{L}{2}\right)}+\ldots \hat{G}_{-\frac{L}{2}+1} \hat{x}_{\sigma^{-1}(\frac{L}{2}-1)} \omega_{N}^{\tau (\frac{L}{2}-1)} \\
\hat{y}_{L, \tau, \sigma}[1] \approx &\hat{G}_{\frac{L}{2}} \hat{x}_{\sigma^{-1}\left(\frac{L}{2}\right)} \omega_{N}^{\tau\left(\frac{L}{2}\right)}+\ldots \hat{G}_{-\frac{L}{2}+1} \hat{x}_{\sigma^{-1}(\frac{3L}{2}-1)} \omega_{N}^{\tau (\frac{3L}{2}-1)}\\
\hat{y}_{L, \tau, \sigma}[i] \approx &\hat{G}_{\frac{L}{2}} \hat{x}_{\sigma^{-1}\left(\frac{(2i-1)L)}{2}\right)} \omega_{N}^{\tau\left(\frac{(2i-1)L}{2}\right)}+\ldots \hat{G}_{-\frac{L}{2}+1} \hat{x}_{\sigma^{-1}(\frac{(2i+1)L}{2}-1)} \omega_{N}^{\tau (\frac{(2i+1)L}{2}-1)}
\end{split} \label{Eq13}
\end{equation}
\end{lemma}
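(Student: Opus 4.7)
The plan is to chain together three Fourier-domain identities already established in the preliminaries --- permutation (Eq.~(\ref{Eq8})), the convolution formula for coordinate-wise multiplication (Eq.~(\ref{Eq4})), and the aliasing identity for the frequency-subsampled filter (Eq.~(\ref{Eq12})) --- and then invoke the support of $\hat{G}$ given by Eq.~(\ref{Eq9}) to truncate the resulting convolution sum to roughly $L$ terms.

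First I would introduce the intermediate vectors $x' = \mathbf{S}_\tau \mathbf{P}_\sigma x$ and $z = \mathbf{Q}_L x'$, so that $y_{L,\tau,\sigma} = \mathbf{U}_L z$. By Eq.~(\ref{Eq8}), $\hat{x'}_i = \hat{x}_{\sigma^{-1} i}\,\omega_N^{\tau i}$. Since $z_j = G_j x'_j$ is a pointwise product, Eq.~(\ref{Eq4}) gives $\hat{z} = \hat{G} \ast \hat{x'}$, that is,
\begin{equation*}
\hat{z}[k] \;=\; \sum_{m=0}^{N-1} \hat{G}[m]\, \hat{x'}[k-m] \;=\; \sum_{m=0}^{N-1} \hat{G}[m]\, \hat{x}_{\sigma^{-1}(k-m)}\,\omega_N^{\tau(k-m)}.
\end{equation*}
Applying Eq.~(\ref{Eq12}) to the aliased vector $y_{L,\tau,\sigma}$ then yields $\hat{y}_{L,\tau,\sigma}[i] = \hat{z}[iL]$, so it only remains to substitute $k = iL$ and simplify the convolution.

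The final step is to use the flat-window bounds in Eq.~(\ref{Eq9}): outside $[-\epsilon N, \epsilon N]$ the magnitude $|\hat{G}[m]|$ is at most $\delta$, so restricting the sum to $m \in \{-L/2+1,\ldots,L/2\}$ introduces an error of order $\delta\,\|\hat{x}\|_1$, which is exactly what the ``$\approx$'' sign is covering. Reading off the surviving terms for $i = 0$, $i = 1$, and general $i$ reproduces the three lines in Eq.~(\ref{Eq13}) verbatim --- the extreme terms $m = L/2$ and $m = -L/2+1$ correspond respectively to $\hat{x}_{\sigma^{-1}((2i-1)L/2)}$ and $\hat{x}_{\sigma^{-1}((2i+1)L/2 - 1)}$, and the remaining summands interpolate between them. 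The only delicate point in the argument is the truncation, because $\hat{G}$ is not compactly supported and also has a transition band $[\epsilon' N,\epsilon N]$ where its values lie in $[0,1]$ rather than being exactly $1$ or exactly $0$; but both the tail contribution and the transition-band contribution are controlled by $\delta$, so no estimate beyond Eq.~(\ref{Eq9}) is required.
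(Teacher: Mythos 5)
Your proposal is correct and follows essentially the same route as the paper's proof, which likewise chains the permutation identity (Eq.~(\ref{Eq8})), the pointwise-product/convolution identity (Eq.~(\ref{Eq4})), the subsampling/aliasing identity (Eq.~(\ref{Eq12})), and the flat-window truncation of the convolution to the $L$ terms $\{-L/2+1,\ldots,L/2\}$. Your write-up is in fact slightly more careful than the paper's (you make the truncation error $O(\delta\,\|\hat{x}\|_1)$ explicit and acknowledge the transition band, which the paper's property (P2) silently idealizes away), with the one caveat that the transition-band contribution is bounded by $|\hat{G}_i|\le 1$ rather than by $\delta$, a looseness the lemma's ``$\approx$'' absorbs in both proofs.
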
 
\begin{proof}
\begin{equation*}
\begin{split} 
&\text{(P0)} x'=\mathbf{S}_\tau \mathbf{P}_\sigma  x   \Rightarrow \hat{x'}_{i}=\hat{x}_{\sigma^{-1}i}\omega ^{\tau i} \\
&\text{(P1)} x''=  \mathbf{Q}_L   x'    \Rightarrow \begin{bmatrix} \hat{x''}[0] \\ \cdots \\ \hat{x''}[L]\\ \cdots \\ \hat{x''}[iL] \\ \cdots \end{bmatrix} = \begin{bmatrix} \hat{x'}[0] \\ \cdots \\ \hat{x'}[L]\\ \cdots \\ \hat{x'}[iL]
\\ \cdots \end{bmatrix}  \ast \begin{bmatrix} \hat{G}[0] \\ \cdots \\ \hat{G}[L]\\ \cdots \\ \hat{G}[iL] \\ \cdots \end{bmatrix} \\ &\text{(P2)} \left | {G}[i] \right | =\left\{\begin{matrix}  \approx 1, & i\in [-\frac{L}{2}+1, \frac{L}{2}] \\ \approx  0,&\text{o.w.} \end{matrix}\right. \\ &\text{(P3)}
\hat{y}_{L,\tau,\sigma} =\mathbf{F}_{B}\mathbf{U}_{L}x''\Rightarrow \hat{y}_{L,\tau,\sigma}[i]=\hat{x''}[iL] \\ &\text{(P4)} \hat{y}_{L,\tau,\sigma}=\mathbf{F}_{B}\mathbf{U}_{L}\mathbf{Q}_{L}\mathbf{S}_{\tau} \mathbf{P}_{\sigma}x \\ &\text{Based on the above-mentioned properties we get Eq. (13)} 
\end{split} 
\end{equation*}
\end{proof}

If the set $I$ is a set of coordinates position, the position $f=(\sigma^{-1}u) \text{mod} N \in I$, suppose there is no hash collision in the bucket $i$, $i$ = round($u/L$), round() means to make decimals rounded. Through Eq. (\ref{Eq13}), we can get Eq. (\ref{Eq14}).
\begin{equation}
\begin{split}
\hat{y}_{L, \tau, \sigma}[i] \approx \hat{G}_{iL-u} \hat{x}_{\sigma^{-1}u} \omega_{N}^{\tau u} \ \text{for} \ u\in [\frac{(2i-1)L}{2},\frac{(2i+1)L}{2}-1]\\
\hat{x}_{f} \approx \hat{y}_{L, \tau, \sigma}[i]\omega_{N}^{-\tau u}/\hat{G}_{iL-u}  \ \text{for} \ u=\sigma f \text{mod} N,i=\text{round}(u/L)
\end{split} \label{Eq14}
\end{equation}

In all, the performance analysis of frequency bucketization is described as follows: random spectrum permutation($x'=\mathbf{S}_\tau \mathbf{P}_\sigma  x$, it cost 0 runtime), flat window filter($x''=\mathbf{Q}_L x'$, it cost $w$ runtime and $w$ samples), Fourier Transform of the aliasing signal($\hat{y}_{L,\tau,\sigma}=\mathbf{F}_{B}\mathbf{U}_{L}x''$, it cost $B$log$B$ runtime and 0 samples). So frequency bucketization one round cost $w+B$log$B$ runtime and $w$ samples.

\section{Algorithms analysis}
As mentioned above, frequency bucketization can decrease runtime and sampling complexity in the advantage of all operations are calculated in $B$ dimensions($B = O(K), B << N$). After frequency bucketization; the filtered signal $\hat{y}_{L,\tau,\sigma}$ can be obtained by original signal $x$; the subsequent work is spectrum reconstruction by identifying frequencies that are isolated in their buckets. Suppose in one bucket, the number of significant frequencies is denoted by $p$. In most cases, $p = 0$ respects sparsity. In a small number of cases, $p = 1$ respects only one significant frequency in the bucket. Only in very little cases, $p\geq  2$ respects more than one frequencies aliasing in the bucket. 

There are two frameworks to reconstruct spectrum: the one-shot framework and the iterative framework. The solver of the one-shot framework is the probability and statistics voting method based on multiple random bucketization. The typical algorithms of the one-shot framework are the sFFT1.0 algorithm and the sFFT2.0 algorithm. The multiscale sFFT algorithm named the sFFT4.0 algorithm introduced in this paper adopts the iterative framework. There are two improvements to the iterative framework. The first improvement is that once the frequency coefficients were found and estimated, they can be subtracted from the signal. It can reduce the amount of work to be done in subsequent steps. It is not necessary to update the whole input signal. Instead, it is sufficient to update the $B$-dimensional buckets, so the removal of the effects of already found coefficients can be done in $O(B)$ time. The second advantage is an improved method for finding the signal's significant frequency coordinates rather than the voting method. In the iterative algorithms, $R$(= two or $\text{log}_l L$) rounds is enough in their own ways. But in the one-shot framework, $R(\approx logN)$ rounds must be required to get correct locations at a high probability.

\subsection{The overall process of the multiscale sFFT algorithm}
The operation in each iteration can recover the single frequency isolated in the bucket. It means that each spectrum reconstruction can solve the case of buckets with $p<=1$. The idea based on the iterative framework is that the filtered signal subtracts these estimated frequencies in the next iteration must be more sparse and easier to recover. Moreover, the initial aliasing frequencies will be separated in subsequent iterations.

In the No.$m$' iteration, let $K_m$ be the expected sparsity($K_1=K, K_2=K/2,\dots$), $R_m$ be how many rounds in the No.$m$’ iteration($R_m$ is equal to two for the sFFT3.0 algorithm, $R_m$ is approximately equal to $\text{log}_l L_m$ for the sFFT4.0 algorithm and $l$ is the multiscale parameter respecting the number of blocks), $B_m$ be the number of buckets, $L_m$ be the size of one bucket, $w_m$ be the support of flat filter $G$, $\hat{y}_{L,\tau,\sigma}$ be filtered spectrum, $\hat{y}_{\text{update}}$ be filtered spectrum have already known, $\hat{y'}_{L,\tau,\sigma}$ be the spectrum need to recover($ \hat{y'}_{L,\tau,\sigma}  = \hat{y}_{L,\tau,\sigma} - \hat{y}_{\text{update}}$), $\hat{x}^{m-1}$ be the last result, $\hat{x'}^{m}$ be the recovered spectrum, $\hat{x}^{m}$ be the new result $( \hat{x}^{m} = \hat{x}^{m-1} + \hat{x'}^{m})$, set $ \tau = \{\tau _1 \dots \tau _R\}$ be offset parameter, set $\sigma = \{\sigma_1 \dots \sigma _R\}$ be scaling parameter. The system block diagram of the first iteration and second iteration is shown in Fig. \ref{fig1} and Fig. \ref{fig2}.
\begin{figure}[H]
\centering
\includegraphics[width=12 cm]{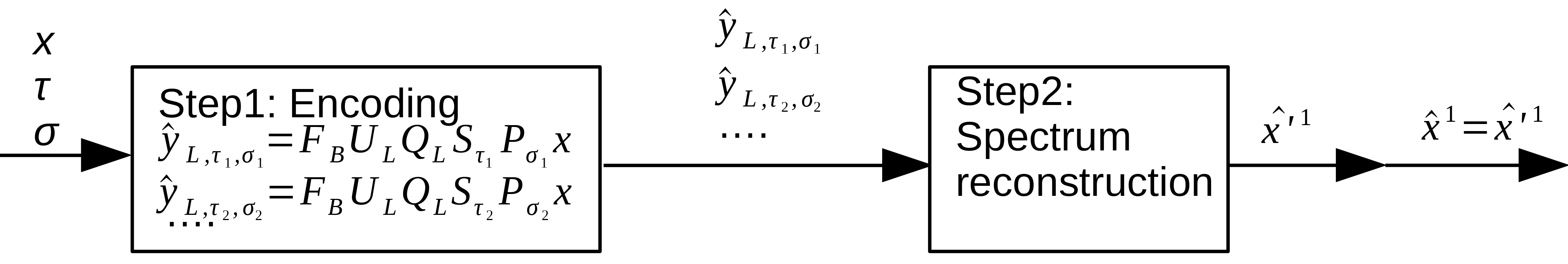}
\caption{The system block diagram of the first iteration.}\label{fig1}
\end{figure}   
\begin{figure}[H]
\centering
\includegraphics[width=12 cm]{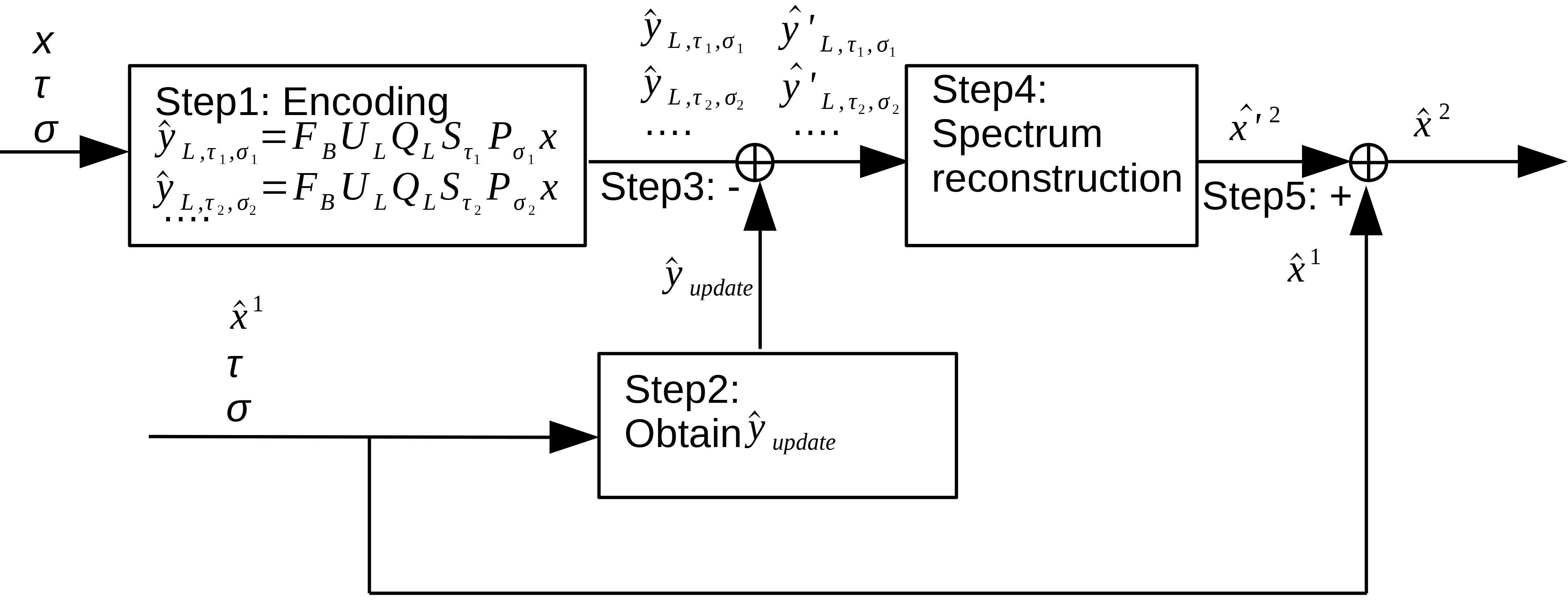}
\caption{The system block diagram of the second iteration.}\label{fig2}
\end{figure} 

The first iteration is divided into the following two steps: Step 1 Encoding: Run $R_1$ bucketization rounds for $K_1, B_1, L_1, \{\tau _1 \dots \tau _R\},  \{\sigma_1 \dots \sigma _R\}$ to calculate $\hat{y}_{L,\tau,\sigma}=\mathbf{F}_{B}\mathbf{U}_{L}\mathbf{Q}_{L}\mathbf{S}_{\tau} \mathbf{P}_{\sigma}x$ representing filtered spectrum. It costs $R_1(w_1+B_1\text{log}B_1)$ runtime and $R_1w_1$ samples. Step 2 Spectrum reconstruction: Recover the spectrum $\hat{x'}^{1}$ of $\hat{y}_{L,\tau,\sigma}$ by the multiscale method. It costs $T_1$ runtime($T_m$ is denoted by the runtime complexity of the spectrum reconstruction in the No.$m$' iteration. The detail will be explained in the following chapters). The result of the first iteration is equal to the recovered spectrum($\hat{x}^{1} = \hat{x'}^{1}$). It costs $R_1(w_1+B_1\text{log}B_1)+T_1$ runtime and $R_1w_1$ samples in the first iteration.

The second iteration is divided into the following five steps: Step 1 Encoding: Run $R_2$ bucketization rounds for $K_2, B_2, L_2, \{\tau _1 \dots \tau _R\},  \{\sigma_1 \dots \sigma _R\}$ to calculate $\hat{y}_{L,\tau,\sigma}=\mathbf{F}_{B}\mathbf{U}_{L}\mathbf{Q}_{L}\mathbf{S}_{\tau} \mathbf{P}_{\sigma}x$ representing filtered spectrum. It costs $R_2(w_2+B_2\text{log}B_2)$ runtime and $R_2w_2$ samples. Step 2 Obtain filtered spectrum have already known: Run $R_2$ times rounds for $\hat{x}^{1}$, set $\tau$, set $\sigma$ and Eq. (\ref{Eq14}) to obtain $\hat{y}_{\text{update}}$ representing filtered spectrum have already known by the last result $\hat{x}^{1}$. The runtime complexity depends on the support of $\hat{x}^{1}$, so it costs $R_2(K-K_2)$ runtime. Step 3 Obtain spectrum need to recover: $ \hat{y'}_{L,\tau,\sigma}  = \hat{y}_{L,\tau,\sigma} - \hat{y}_{\text{update}}$; representing spectrum need to recover. It costs $R_2(K-K_2)$ runtime. Step 4 Spectrum reconstruction: Recover the spectrum $\hat{x'}^{2}$ of $\hat{y'}_{L,\tau,\sigma}$ by the multiscale method. It costs $T_2$ runtime. Step 5 Obtain the new result: $ \hat{x}^{2} = \hat{x}^{1} + \hat{x'}^{1}$; representing the result of the second iteration. It costs $K_2$ runtime. It costs $R_2(w_2+B_2\text{log}B_2+K-K_2)+T_2$ runtime and $R_2w_2$ samples in the second iteration. 

The subsequent iteration is very similar to the second iteration. It costs $R_m(w_m+B_m\text{log}B_m+K-K_m)+T_m$ runtime and $R_mw_m$ samples in the No.$m$' iteration. If it is the last iteration, the final result is $\hat{x}^{m}$. Otherwise, $\hat{x}^{m}$ will be the input to make $\hat{y}_{\text{update}}$ for the next iteration. 

\subsection{The process of spectrum reconstruction}
The method of spectrum reconstruction, which is based on the multiscale approach, can locate the position block by block. Next, we will introduce the steps, principles and performance of this method in theory in detail.

For convenience, take bucket $i$ as an example; other buckets are similar. In bucket $i$, suppose there is only one significant frequency, the position of this single frequency in bucket $i$ is denoted by $f={\sigma^{-1}u} \ \text{for} \ u\in [\frac{(2i-1)L}{2},\frac{(2i+1)L}{2}-1]$, the value of this frequency is denoted by $\hat{x}_f=\hat{x}_{\sigma^{-1}u} $. If the bucket is an aliasing bucket, the aliasing frequencies will be separated in subsequent iterations by using different $\sigma$.

The parameters used for the first multiscale phase location are as follows: let $L_1$ be the length in this location($L_1 =L= N/B$), $l$ be the multiscale parameter respecting the number of blocks, $r_1$ be the size of one block($r_1=L_1/l$), $u$ be the real position, and $u'$ be the located position which calculated by two rounds, $u_{1\text{min}}$ be the starting position in the region($u_{1\text{min}}= (2i-1)L/2$), $u_{1\text{max}}$ be the termination position in the region($u_{1\text{max}}= (2i+1)L/2$), the range of the region is $ [\frac{(2i-1)L}{2},\frac{(2i+1)L}{2}-1]$ or $[u_{1\text{min}},u_{1\text{max}} )$ or $[u_{1\text{min}},u_{1\text{min}}+r_1 l)$, As is shown in Fig. \ref{fig4}a, the region can divided into $l$ blocks as follows: $[u_{1\text{min}},u_{1\text{min}}+r_1), [u_{1\text{min}}+r_1,u_{1\text{min}}+2r_1), …, [u_{1\text{min}}+(l-1)r_1,u_{1\text{min}}+lr_1)$. During the first location, $\tau_1=0,\sigma_1=\sigma$ for the first round and $\tau_2=\tau,\sigma_2=\sigma$($\tau\approx N/L_1$) for the second round. We use $\Phi (\theta )$ to denote the phase of $\theta$; function $\Phi (\theta )$ satisfies $e^{\Phi (\theta )\mathbf{i}}=\theta, \ \Phi (\theta ) \in [0,2\pi)$. In bucket $i$, the noise for the first round and for the second round is defined as $S_{\tau_1}[i]$ and $S_{\tau_2}[i]$ respectively, they satisfy Eq. (\ref{Eq15}). Through Eq. (\ref{Eq15}), we can obtain Eq. (\ref{Eq16}) and Eq. (\ref{Eq17}) where $\left |   \! \Phi (\frac{\hat{y}_{\tau_1}[i]- S_{\tau_1}[i]}{\hat{y}_{\tau_2}[i]- S_{\tau_2}[i]}) - \Phi (\frac{\hat{y}_{\tau_1}[i]}{\hat{y}_{\tau_2}[i]})\right |$ is denoted by $\Delta \Phi (\theta )$ and $\left | u-u' \right |$ is denoted by $\Delta u$. 
\begin{equation}
\begin{split}
S_{\tau_1}[i]=\hat{y}_{L_1, \tau_1, \sigma}[i] - \hat{G}_{iL-u} \hat{x}_{\sigma^{-1}u}\\
S_{\tau_2}[i]=\hat{y}_{L_1, \tau_2, \sigma}[i] - \hat{G}_{iL-u} \hat{x}_{\sigma^{-1}u}\omega^{\tau u}_N
\end{split} \label{Eq15}
\end{equation}

\begin{equation}
\begin{split}
\omega_{N}^{\tau u}\! =\! \frac{\hat{y}_{\tau_1}[i] - S_{\tau_1}[i]}{\hat{y}_{\tau_2}[i]  - S_{\tau_2}[i]} \!
 \Rightarrow \! u \tau\text{mod}N \frac{2\pi }{N} \!= \! \Phi (\frac{\hat{y}_{\tau_1}[i]- S_{\tau_1}[i]}{\hat{y}_{\tau_2}[i]- S_{\tau_2}[i]}) \\
\omega_{N}^{\tau u'}= \frac{\hat{y}_{\tau_1}[i]}{\hat{y}_{\tau_2}[i]} \Rightarrow u' \tau \text{mod}N \frac{2\pi }{N}=\Phi (\frac{\hat{y}_{\tau_1}[i]}{\hat{y}_{\tau_2}[i]})
\end{split} \label{Eq16}
\end{equation}

\begin{equation}
\Delta \Phi (\theta )=\Delta u \cdot \tau\cdot \frac{2\pi}{N} \approx \Delta u\cdot \frac{2\pi}{L_1} \label{Eq17}
\end{equation}

There are two possibilities for the relationship between $u$ and the block where $u'$ is located. The first possibility is the ideal case, which is shown in Fig. \ref{fig3}a, where $u$ is contained in the located block. In this case, the Eq. $\Delta u \leq \frac{r_1}{2}$ need to be satisfied, so that we can get Eq. (\ref{Eq18}). The second possibility is the non-ideal case, which is shown in Fig. \ref{fig3}b, where $u$ is not contained in the located block. The block needs to be expanded to a new region to contain $u$. The extension parameter is defined as $q$ respecting new addition block is $q$ times of the original block. For example, $q=1$ means the new region is 0.5 times larger than the original block in the upper and lower bounds of the original block; and becomes as large as 2 times of the original block. In another example, $q=0.5$ means the new region becomes as large as 1.5 times of the original block. In this case, the Eq. $\frac{q}{2}r_1\geq  \Delta u $ need to be satisfied, so that we can get Eq. (\ref{Eq19}). From Eq. (\ref{Eq18}), we can see the upper limit of the multiscale parameter $l$ is in inverse ratio to $\text{max}\left \{ \Delta \Phi  (\theta )\right \}$. From Eq. (\ref{Eq19}), we can see the lower limit of the extension parameter $q$ is in direct ratio to $\text{max}\left \{ \Delta \Phi  (\theta )\right \}$ and $l$.  These two parameters do not depend on the number of multiscale location.
\begin{equation}
l\leq \frac{\pi}{\text{max} \left \{\Delta \Phi  (\theta )\right \}} \label{Eq18}
\end{equation}
\begin{equation}
q\geq \text{max}\left \{ \Delta \Phi  (\theta )\right \} \cdot \frac{l}{\pi}  \label{Eq19}
\end{equation}
\begin{figure} [H]\centering  
\subfigure[The ideal case of the real position contained in the located block. ] { 
\includegraphics[width=0.45\columnwidth]{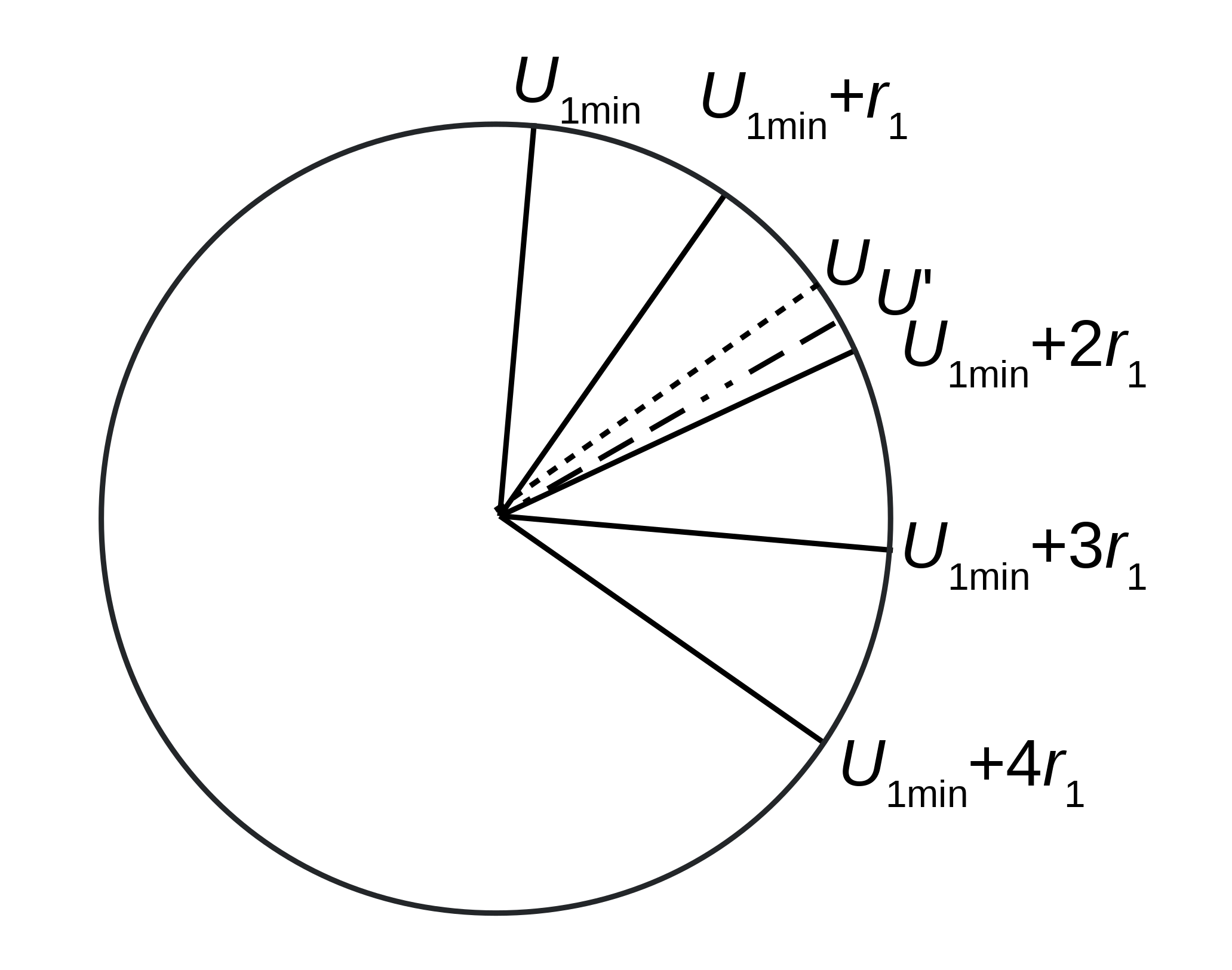}
}   
\subfigure[The non-ideal case of the real position uncontained in the located block.] { 
\includegraphics[width=0.45\columnwidth]{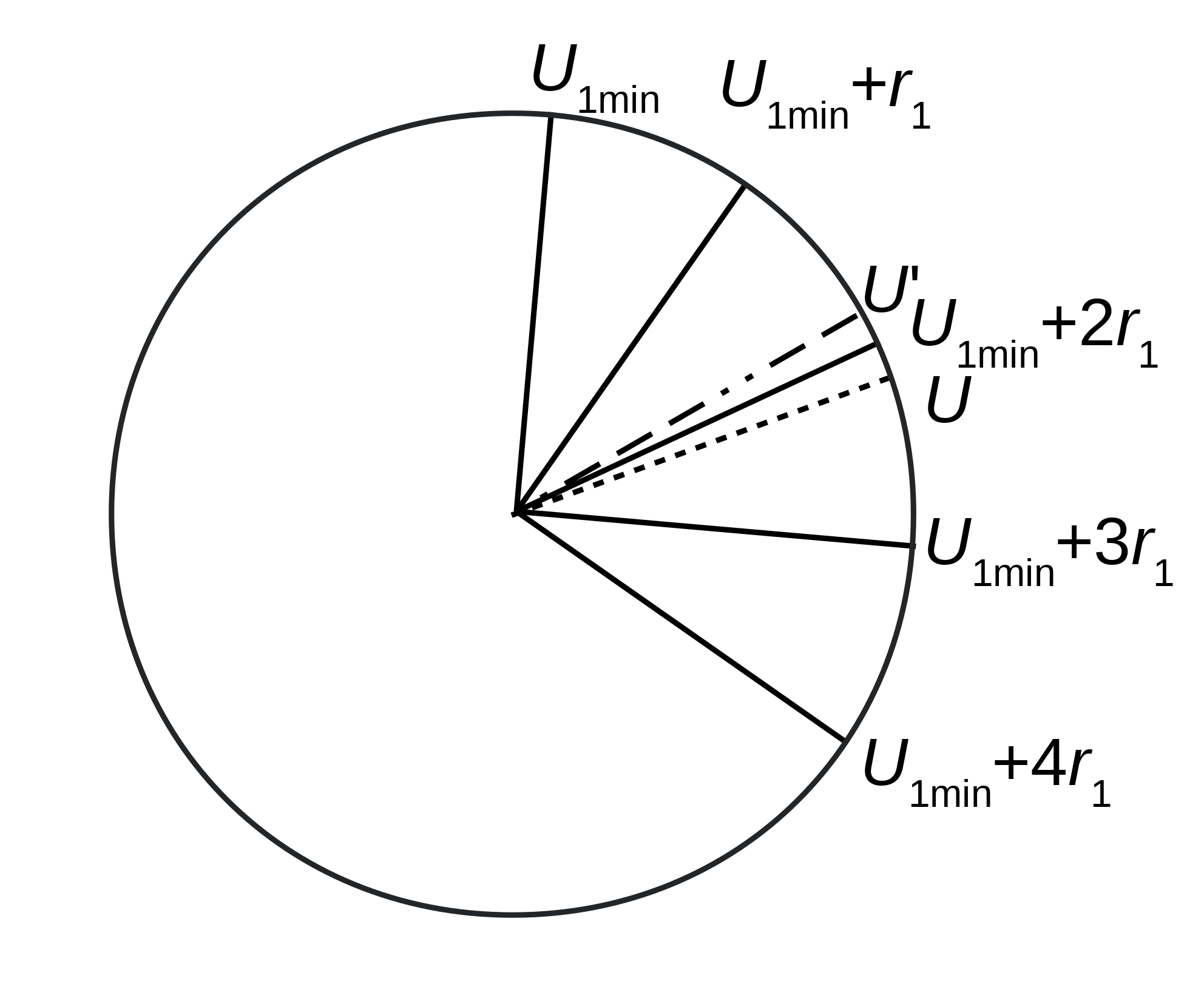}   
}   
\caption{Two possibilities for the relationship between the real position and the located block.}  \label{fig3} 
\end{figure}

After knowing these assumptions, we can explain the process of the first location clearly. As is shown in Fig. \ref{fig4}a, initially, the range of the old region is $[u_{1\text{min}},u_{1\text{max}} )$. As is shown in Fig. \ref{fig4}b, through scale operation, we enlarge the length of the old region from $L_1$ to almost $N$. As is shown in Fig. \ref{fig4}c, we can determine which block is we purchased according to $u'$(green space) and expend to a new region(yellow space). The number of the located block is denoted by $l_r$ which satisfied the Eq. $\text{argmin}\left |(u_{1\text{min}}+(l_r+0.5) r_1)\tau \text{mod}N \frac{2\pi }{N}-\Phi (\frac{\hat{y}_{\tau_1}[i]}{\hat{y}_{\tau_2}[i]})\right |$. After obtain the located block $[u_{1\text{min}}+l_r r_1, u_{1\text{min}}+(l_r+1)r_1]$, we determine the new region just as $[u_{1\text{min}}+(l_r-q/2)r_1, u_{1\text{min}}+(l_r+1+q/2)r_1]$. As is shown in Fig. \ref{fig4}d, we can do the next multiscale approach location through the new region. For the second location, the length of the region is changed from $L_1$ to $L2=L_1 (q+1)/l$, the starting position $u_{2\text{min}}= u_{1\text{min}}+(l_r-q/2)r_1$, the termination position $u_{2\text{max}}=  u_{1\text{min}}+(l_r+1+q/2)r_1$. It can be seen the speed of approach is in direct ratio to $l$ and in inverse ratio to $q$.

\begin{figure} [H]\centering  
\subfigure[The range of the old region.] { 
\includegraphics[width=0.4\columnwidth]{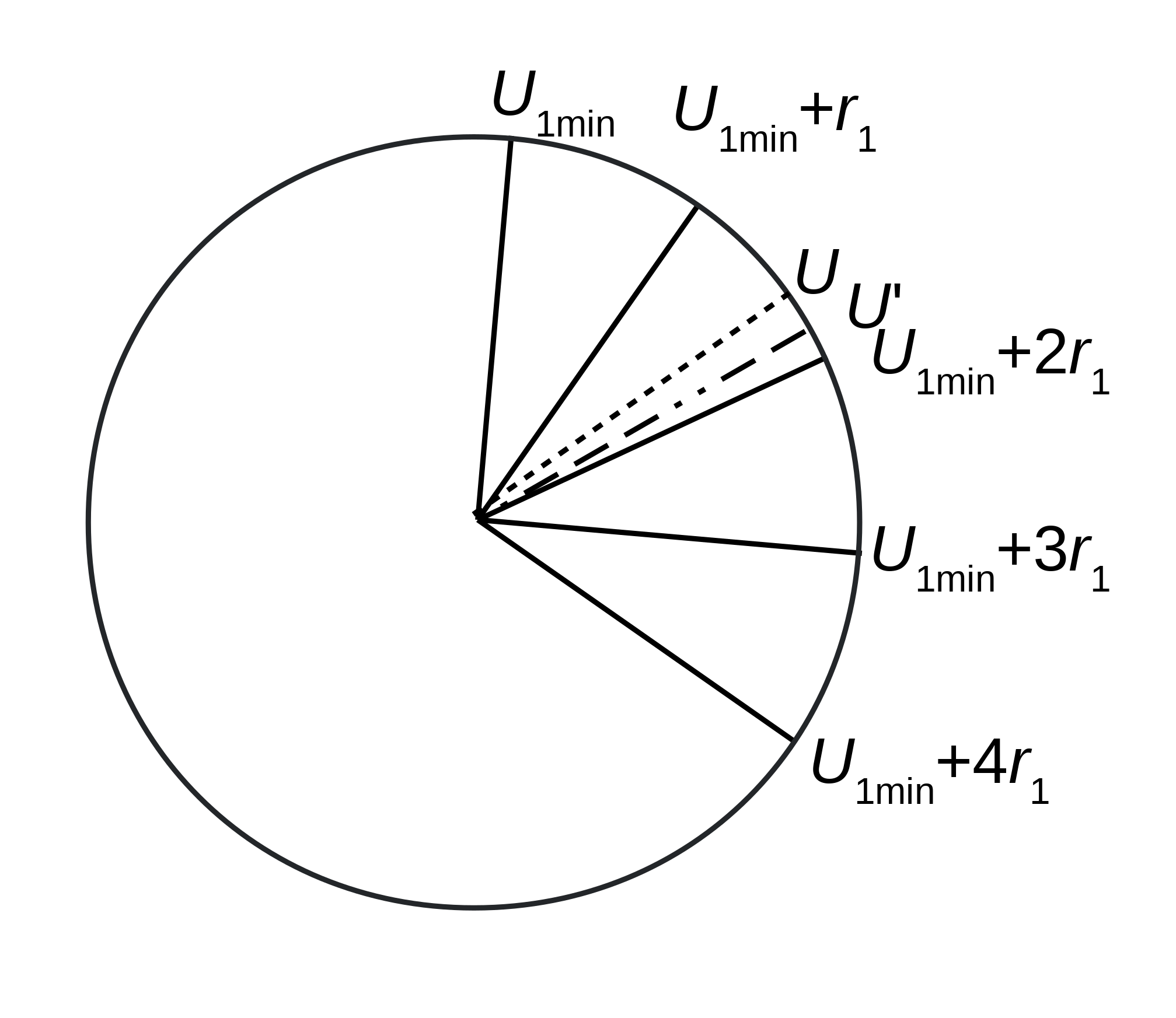}
}   
\subfigure[The range of the expanded old region.] { 
\includegraphics[width=0.5\columnwidth]{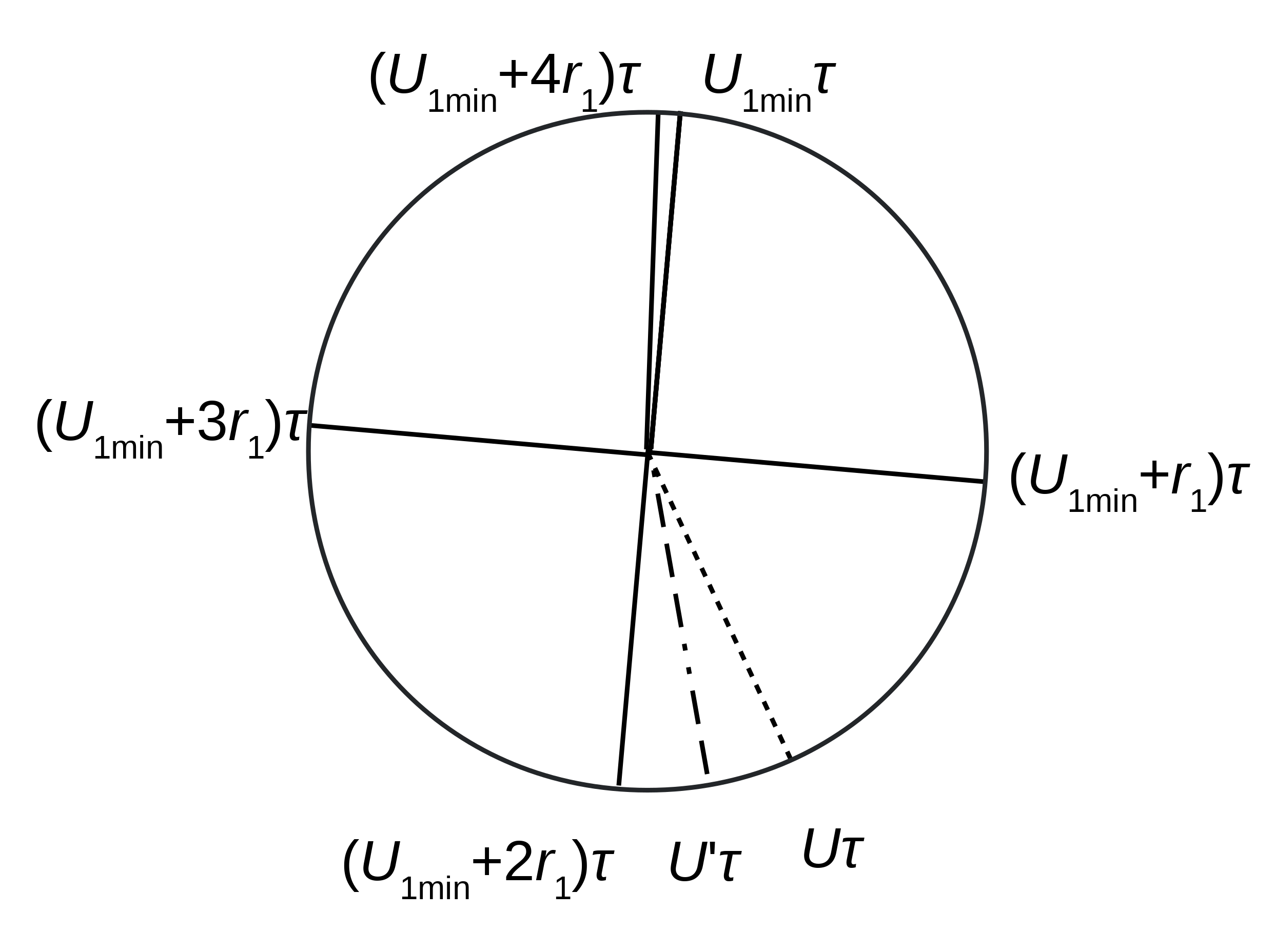}   
}   

\subfigure[Determine and expand the block.] { 
\includegraphics[width=0.5\columnwidth]{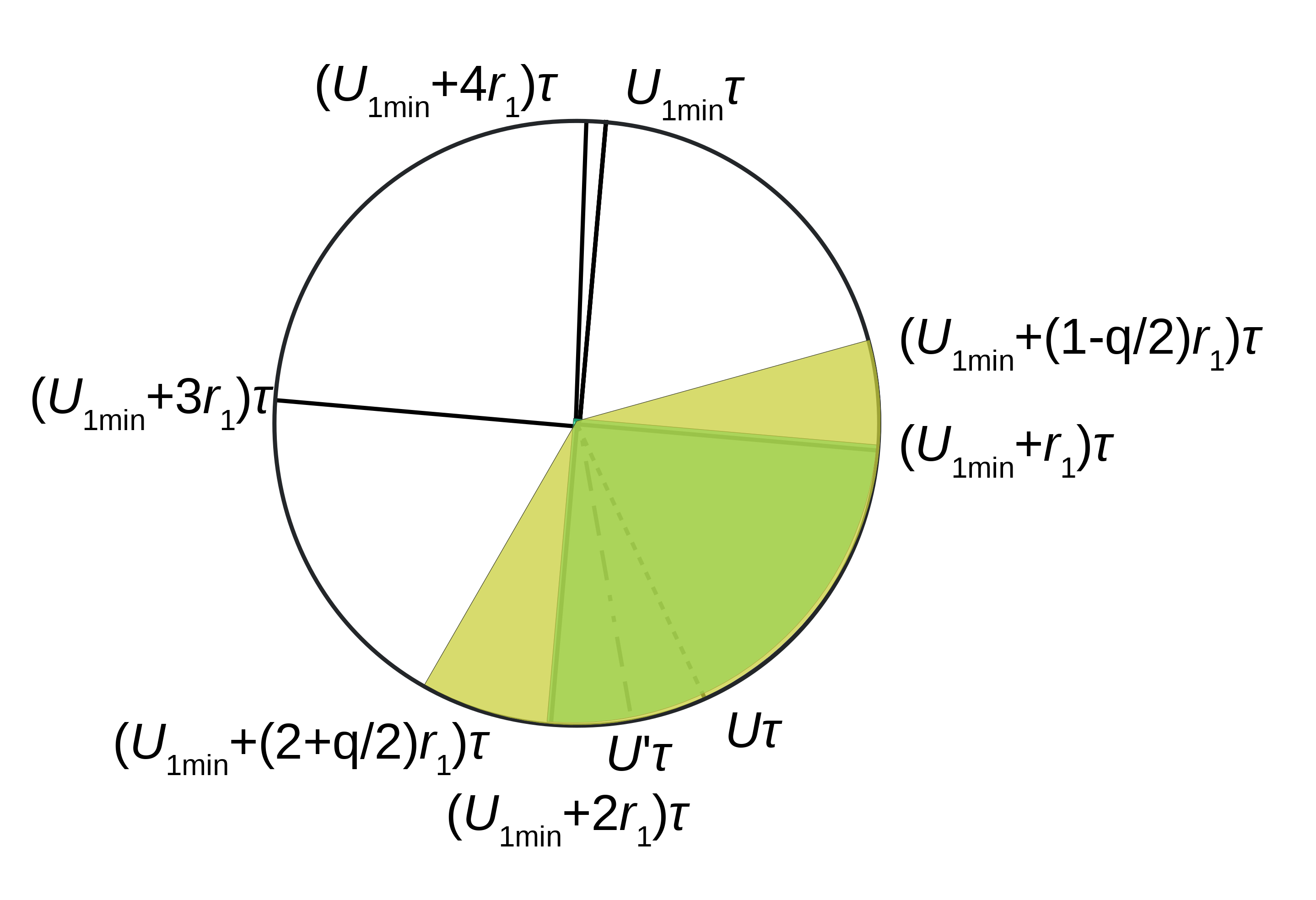}
}   
\subfigure[Restore and determine the new region.] { 
\includegraphics[width=0.35\columnwidth]{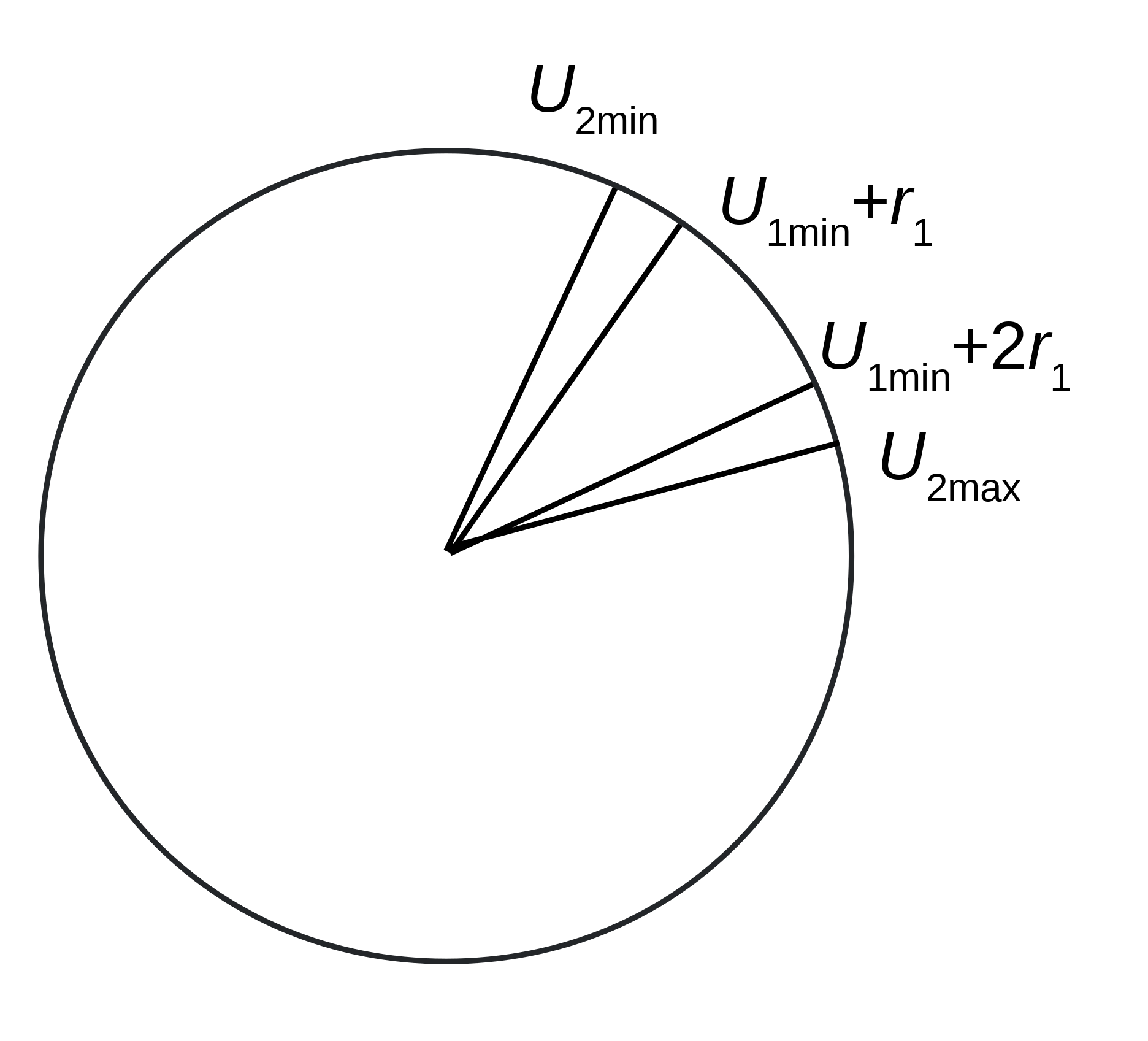}   
}   
\caption{The process of the first location.}  \label{fig4} 
\end{figure}

In the second multiscale location, $L_2, u_{2\text{min}}, u_{2\text{max}}$ have been known by the first location, $r_2$ be the size of one block($r_2=L_2/l$), the range of the region is $[u_{2\text{min}},u_{2\text{max}})$ or $[u_{2\text{min}},u_{2\text{min}}+r_2l]$, it can divided into $l$ blocks as follows: $[u_{2\text{min}},u_{2\text{min}}+r_2), [u_{2\text{min}}+r_2,u_{2\text{min}}+2r_2), …, [u_{2\text{min}}+(l-1)r_2,u_{2\text{min}}+lr_2)$. By using the same way, we can obtain the new region $[u_{3\text{min}},u_{3\text{max}})$. For the third location, the length of the region is changed from $L_2$ to $L_3=L_2(q+1)/l$. By continuously reducing the location region, we can locate the position block by block. It can be proved that the number of times needed to locate is $\text{log}_{(l/(q+1))}L$.

From Eq. (\ref{Eq18}) and Eq. (\ref{Eq19}), we can see $\text{max}\left \{ \Delta \Phi  (\theta )\right \}$ is very important to determine $l$ and $q$. By its definition, it is mainly determined by the noise $S_{\tau}[i]$. The noise $S_{\tau}[i]$ is mainly determined by the length of buckets $L$ and the signal noise ratio(SNR). If the SNR is low, the noise must be increased, and if the length one bucket is small, the noise must be increased too. So we can do a Monte Carlo experiment to prove it. We calculate the logarithm of the error of phase $\log_{10}(\Delta \Phi (\theta ))$ and computing the PDF(Probability Distribution Function) of the value $\log_{10}(\Delta \Phi (\theta ))$ by the input signals with the same $K$ of different $N$ under different SNR circumstances only if one significant frequency in the bucket, then we obtain Fig. \ref{fig5}. From Fig. \ref{fig5}, the PDF presents a normal distribution, and we can see with the development of SNR(from the yellow space to the green space) or the development of $L$(from Fig. \ref{fig5}a to Fig. \ref{fig5}d), the probability of small $ \Delta \Phi  (\theta )$ increases. If we want to keep the probability greater than 0.99 under the condition of SNR = -20(yellow space), $\text{max}\left \{ \Delta \Phi  (\theta )\right \}$ is about $10^{0.5} \approx  3.2$ to about $10^0 \approx  1$ in the case of different $L$. The upper limit of $l$ is approximately equal to $\pi /3.2\approx 1$ or $\pi /1\approx 3$ through Eq. (\ref{Eq18}) respectively. Under the condition of SNR = 0(red space), $\text{max}\left \{ \Delta \Phi  (\theta )\right \}$ is about $10^0 \approx  1$ to $10^{-0.5} \approx  0.3$ in the case of different $L$. The upper limit of $l$ is approximately equal to $\pi /1\approx 3$ or $\pi /0.3\approx 9$ respectively. Under the condition of SNR$\geq 20$(left of the purple space), $\text{max}\left \{ \Delta \Phi  (\theta )\right \}$ is about $10^{-0.5} \approx  0.3$ to $10^{-1} \approx  0.1$ in the case of different $L$. The upper limit of $l$ is approximately equal to $\pi /0.3\approx 9$ or $\pi /0.1\approx 30$ respectively. As to the lower limit of $q$, if $l$ is equal to the upper limit, the lower limit of $q$ is equal to 1 through the Eq. $q\geq \text{max}\left \{ \Delta \Phi  (\theta )\right \} \cdot \frac{l}{\pi} = {\text{max} \left \{\Delta \Phi  (\theta )\right \}} \cdot \frac{\pi}{\text{max} \left \{\Delta \Phi  (\theta )\right \}}\cdot\frac{1}{\pi}=1$, and if we use small $l$, we can get small $q$ as well. For example, under the condition of SNR = 0, $\text{max}\left \{ \Delta \Phi  (\theta )\right \}$ is about 0.3, and the upper limit of $l$ is approximately equal to 9 with big $L$. If we choose $l$ is equal to 4, the lower limit of $q$ satisfies $q \geq 0.3 \cdot 4 / \pi \approx 0.4$ through Eq. (\ref{Eq19})(Remarks: It is easy to know the PDF of the error of phase $ \Delta \Phi  (\theta )$ does not change much with different $\tau$ or different $\sigma$). 

\begin{figure} [H]\centering  
\subfigure[In the case of $K$ = 50, $N$ = 8192, $L$=32.] { 
\includegraphics[width=0.45\columnwidth]{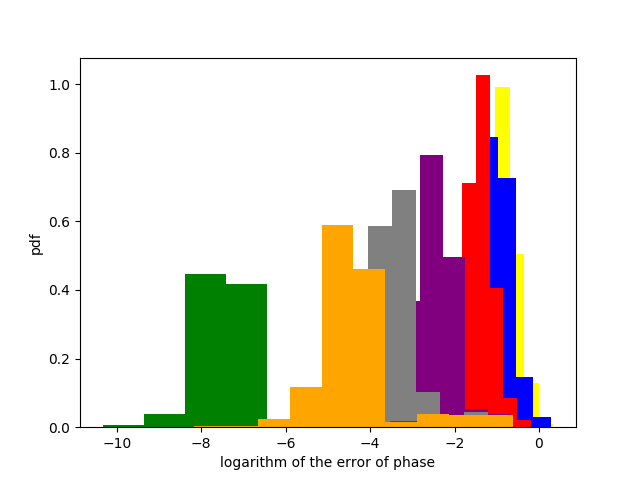}
}  
\subfigure[In the case of $K$ = 50, $N$ = 131072, $L$=2048.] { 
\includegraphics[width=0.45\columnwidth]{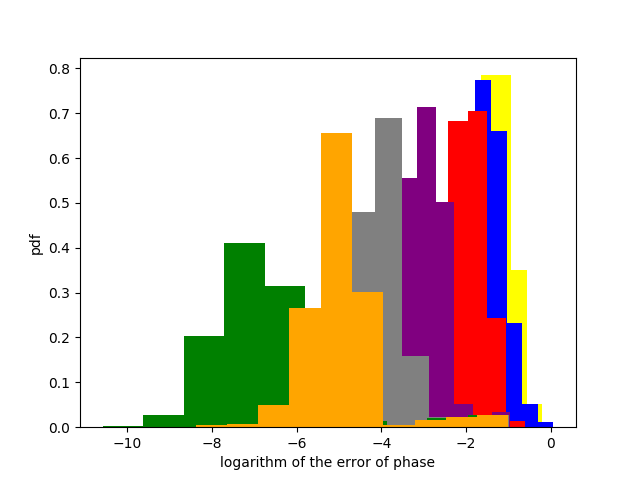}   
} 

\subfigure[In the case of $K$ = 50, $N$ = 1048576, $L$=2048.] { 
\includegraphics[width=0.45\columnwidth]{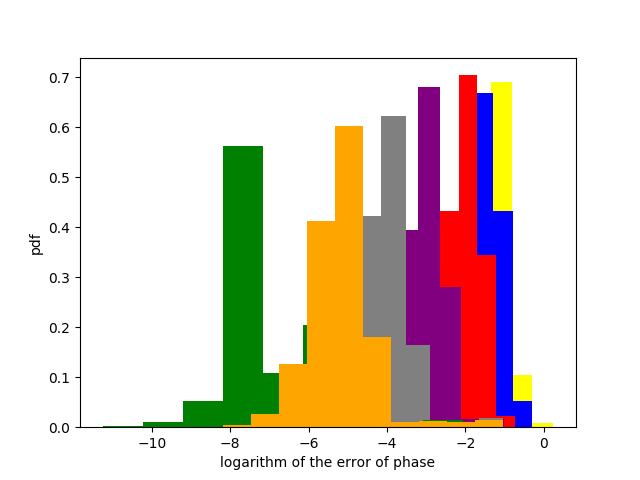}   
}  
\subfigure[In the case of $K$ = 50, $N$ = 4194304, $L$=8192.] { 
\includegraphics[width=0.45\columnwidth]{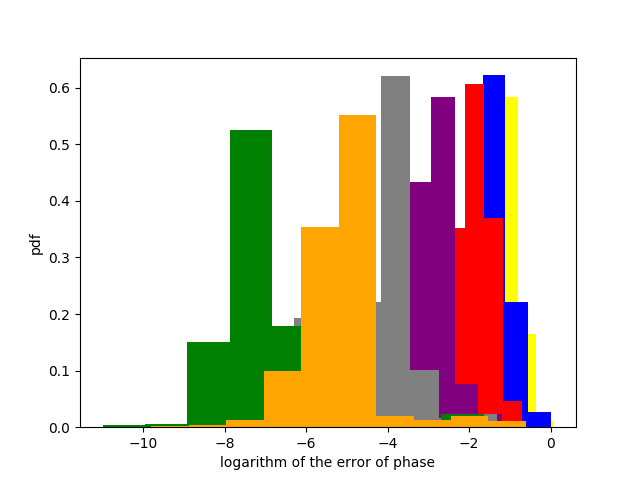}   
}  
\caption{PDF of the $\log_{10}(\Delta \Phi (\theta ))$ vs SNR(yellow space, blue space, red space, purple space, black space, orange space, green space individually represents SNR = -20, -10, 0, 20, 40, 60, 120).}  \label{fig5} 
\end{figure}

\subsection{The performance of the algorithm in theory}
From the previous section, we can get the performance of the spectrum reconstruction in one iteration. In one location, it needs three steps. Step 1: Obtain $\Phi (\frac{\hat{y}_{\tau_1}[i]}{\hat{y}_{\tau_2}[i]})$; it costs 2 runtime. Step 2: Obtain $l_r$ through the Eq. $\text{argmin} |(u_{\text{min}}+(l_r+0.5) r)\tau$ $  \text{mod}N \frac{2\pi }{N}-\Phi (\frac{\hat{y}_{\tau_1}[i]}{\hat{y}_{\tau_2}[i]}) |$; it costs 2 runtime. Step 3: Obtain new region $[u_{\text{min}},u_{\text{max}} )$, it costs 4 runtime. It costs 8 runtime in one location, so it totally costs $8R=O(\text{log}_{(l/(q+1))}L_m)$ runtime in a complete location.

As to every iteration, we can see the performance as follows. In the first iteration, it costs $R_1(w_1+B_1\text{log}B_1)+T_1$ runtime and $R_1w_1$ samples to find at least $K/2$ true frequency. In the second iteration, it costs $R_2(w_2+B_2\text{log}B_2+K-K_2)+T_2$ runtime and $R_2w_2$ samples to find at least $K/4$ true frequency. In the third iteration, it costs $R_3(w_3+B_3\text{log}B_3+K-K_3)+T_3$ runtime and $R_3w_3$ samples to find at least K/8 true frequency$\dots$. Suppose $w_1=B_1\text{log}(N/\sigma), K_1=K, B_1=O(K), w_2=B_2\text{log}(N/\sigma)$,$ K_2=K/2, B_2 =B_1/2$, $w_3=B_3\text{log}(N/\sigma),  K_3=K/4, B_3 =B_2/2, R_1=\text{log}_{(l/(q+1))}L_1, R_2=\text{log}_{(l/(q+1))}L_2, R_3=\text{log}_{(l/(q+1))}L_3$, we can get Lemma 2.

\begin{lemma} 
In the sFFT4.0 algorithm, it costs $O(\text{log}_{(l/(q+1))}(N/K) \ (K\text{log}N))$ runtime and $O(\text{log}_{(l/(q+1))}(N/K) \ (K\text{log}N))$ samples
\end{lemma}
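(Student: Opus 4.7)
The plan is to add the per–iteration resource estimates that were already derived above the statement and show that the telescoping geometric structure of $B_m$, $K_m$, $w_m$, $L_m$ and $T_m$ collapses the total to the claimed bound. First I would rewrite the per–iteration cost in a single clean form: under the stated choices $B_m=O(K/2^{m-1})$, $K_m=K/2^{m-1}$, and $w_m=B_m\log(N/\sigma)=O(B_m\log N)$, the runtime of iteration $m$ is
\begin{equation*}
R_m\bigl(w_m+B_m\log B_m+K-K_m\bigr)+T_m
= O\!\left(R_m\Bigl(\tfrac{K\log N}{2^{m-1}}+K\Bigr)\right)+T_m,
\end{equation*}
while its sampling cost is $R_m w_m=O(R_m K\log N/2^{m-1})$.

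Next I would control $R_m$ and $T_m$ using the multiscale location analysis. Since $L_m=N/B_m=O\!\bigl((N/K)2^{m-1}\bigr)$, the recursion contracts the location region by a factor of $l/(q+1)$ per round, giving $R_m=\log_{l/(q+1)}L_m=\log_{l/(q+1)}(N/K)+O(m)$. Because each location costs constant runtime per round (eight elementary operations, as counted in the preceding paragraphs) and is applied to $B_m=O(K/2^{m-1})$ buckets, the reconstruction cost satisfies $T_m=O(R_m B_m)=O(R_m K/2^{m-1})$, which is already absorbed by the $R_m(K-K_m)$ term. Substituting these bounds back in, the $m$–th iteration contributes
\begin{equation*}
O\!\left(\Bigl(\log_{l/(q+1)}(N/K)+m\Bigr)\Bigl(\tfrac{K\log N}{2^{m-1}}+K\Bigr)\right)
\end{equation*}
to runtime and $O\!\bigl((\log_{l/(q+1)}(N/K)+m)\cdot K\log N/2^{m-1}\bigr)$ to samples.

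Finally I would sum over $m=1,\dots,O(\log K)$ iterations (the halving of $K_m$ terminates after $\log K$ steps). The geometric series in $1/2^{m-1}$ converges to a constant, so the dominant runtime term is $O\!\bigl(\log_{l/(q+1)}(N/K)\cdot K\log N\bigr)$; the residual $\sum_m R_m\cdot K$ contributes at most $O(K\log K\log_{l/(q+1)}(N/K))$, which is absorbed into $O(K\log N\log_{l/(q+1)}(N/K))$ since $\log K\le\log N$. The sample sum is handled identically, yielding the stated bound on samples. The only obstacle I expect is the careful bookkeeping of the $K-K_m$ term, which does not shrink with $m$ and so must be balanced against the geometrically shrinking $B_m\log N$ term; the resolution is that $\sum_m R_m\cdot K$ and $\sum_m R_m\cdot B_m\log N$ are both of the same order once multiplied by $R_1=\log_{l/(q+1)}(N/K)$, so neither dominates beyond the claimed asymptotic.
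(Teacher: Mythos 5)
Your proposal follows essentially the same route as the paper's proof: sum the per-iteration costs $R_m(w_m+B_m\log B_m+K-K_m)+T_m$ and $R_m w_m$, collapse the geometrically decaying terms $B_m\log N$ and $w_m$ via a geometric series, and bound the total by $R_1\cdot O(K\log N)=O(\log_{(l/(q+1))}(N/K)\,K\log N)$. If anything you are slightly more careful than the paper, which implicitly bounds every $R_m$ by $R_1$ even though $R_m=\log_{(l/(q+1))}L_m$ grows with $m$; your expansion $R_m=\log_{(l/(q+1))}(N/K)+O(m)$ together with the convergence of $\sum_m m/2^{m-1}$ closes that small gap while arriving at the same conclusion.
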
 
\begin{proof}
Analysis of runtime complexity:
\begin{equation*}
\begin{split}
& R_1(w_1+B_1\text{log}B_1)+T_1+R_2(w_2+B_2\text{log}B_2+K-K_2)+T_2+\\
& R_3(w_3+B_3\text{log}B_3+K-K_3)+T_3+ \dots\\
=& O( R_1(B_1\text{log}N)+R_2(B_2\text{log}N)+R_3(B_3\text{log}N)+...+ R_2 K/2+ R_3 3K/4+\dots) \\
\leq  &  O(R_1(B_1\text{log}N+(B_1\text{log}N)/2+(B_1\text{log}N)/4+…+K\text{log}K))  \\
= & O(\text{log}_{(l/(q+1))}(N/K) \ (K\text{log}N)) 
\end{split} 
\end{equation*}
Analysis of sampling complexity:
\begin{equation*}
\begin{split}
      & R_1w_1+R_2w_2+R_3w_3+ \dots \\
=  & O(R_1(B_1\text{log}N+(B_1\text{log}N)/2+(B_1\text{log}N)/4+\dots)) \\
= & O(\text{log}_{(l/(q+1))}(N/K) \ (K\text{log}N))  
\end{split} 
\end{equation*}
\end{proof}

Through Lemma 2, we find when $l/(q + 1)$ is large, the time complexity and sampling complexity are sub-linear correlation with $N$ even lower than $\text{log}N\text{log}N$.

\subsection{The comparison with other algorithms}
After analyzing the sFFT4.0 algorithm, the comparison with other algorithms is necessary. The performance of the sFFT1.0, sFFT2.0 and sFFT3.0 algorithm can be seen from \cite{IEEEexample:Hassanieh2012}, \cite{IEEEexample:Hassanieh2012-2}. The performance of the MPFFT algorithm can be seen from \cite{IEEEexample:Chiu2014}. The performance of the sFFT-DT1.0, sFFT-DT2.0 algorithm can be seen from \cite{IEEEexample:Hsieh2013}, \cite{IEEEexample:Hsieh2015}. The performance of the FFAST, R-FFAST algorithm can be seen from \cite{IEEEexample:Pawar2013}, \cite{IEEEexample:Pawar2018}, \cite{IEEEexample:Pawar2015}, \cite{IEEEexample:Ong2019}. The performance of the AAFFT algorithm can be seen from \cite{IEEEexample:Gilbert2002Near}, \cite{IEEEexample:Iwen2007Empirical}, \cite{IEEEexample:Gilbert2008A}. The performance of the fftw algorithm is common. The codes of the sFFT1.0, sFFT2.0\footnote{The code is available at http://groups.csail.mit.edu/netmit/sFFT/.}, sFFT3.0, MPFFT\footnote{The code is available at https://github.com/urrfinjuss/mpfft.}, sFFT-DT2.0\footnote{The code is available at https://www.iis.sinica.edu.tw/pages/lcs.}, R-FFAST\footnote{The code is available at https://github.com/UCBASiCS/FFAST.}, AAFFT\footnote{The code is available at https://sourceforge.net/projects/aafftannarborfa/.}, fftw\footnote{The code is available at http://www.fftw.org/.} algorithm are already open sources. Table 1 can be concluded with the information of all typical sFFT algorithms and fftw algorithm in theory.

\begin{table}[H] 
\caption{The performance of fftw algorithm and sFFT algorithms in theory}
\label{tab:2}
\setlength{\tabcolsep}{4pt}
\begin{tabular}{|p{50pt}|p{120pt}|p{110pt}|p{35pt}|}
\hline
algorithm& 
runtime complexity& 
sampling complexity & robustness \\
\hline
sFFT1.0&
$O(K^{\frac{1}{2}}N^{\frac{1}{2}}\ \text{log} ^{\frac{3}{2}}N)$& 
$N\left(1-\left(\frac{N-w}{N}\right)^{\ \text{log} N}\right)$& 
medium \\

sFFT2.0&
$O(K^{\frac{2}{3}}N^{\frac{1}{3}} \ \text{log} ^{\frac{4}{3}} N)$& 
$N\left(1-\left(\frac{N-w}{N}\right)^{\ \text{log} N}\right)$& 
medium  \\

sFFT3.0&
$O(K \ \text{log} N) $& 
$O(K \ \text{log} N) $& 
none  \\

sFFT4.0&
$O(K \ \text{log} N \ \text{log}_{(l/(q+1))}(N/K) )$& 
$O(K \ \text{log} N \ \text{log}_{(l/(q+1))}(N/K) )$& 
bad  \\

MPFFT&
$O(K \log N \ \text{log} _2 (N/K) )$& 
$O(K \log N \ \text{log} _2 (N/K) )$& 
good  \\

sFFT-DT1.0&
$O(K \ \text{log} K )$& 
$O(K)$& 
none  \\

sFFT-DT2.0&
$O(K \ \text{log} K + N)$& 
$O(K )$& 
medium  \\

FFAST&
$O( K \ \text{log} K) $& 
$O( K ) $& 
none  \\

R-FFAST&
$ O (K \ \text{log} ^{7/3} N)$& 
$O( K \ \text{log} ^{4/3} K) $& 
good  \\

AAFFT&
$O( K \text{poly} ( \ \text{log} N))$& 
$O( K \text{poly} ( \ \text{log} N))$& 
medium   \\

fftw&
$ O( N \ \text{log} N)$& 
$O(N)$& 
good  \\

\hline

\end{tabular}
\label{tab1}
\end{table}

	From the table, we can see the advantages of the sFFT4.0 algorithm is it has low runtime and sampling complexity of all sFFT algorithms except special condition algorithm for the order of magnitude of $N$. The disadvantages of the sFFT4.0 algorithm is it only has medium robustness under some parameters.

\section{Experimental evaluation}
In this section, we evaluate the performance of the sFFT4.0 algorithm. At first, we compare the algorithms' runtime, percentage of signal sampled and robustness characteristics with different parameters. Then we compare the algorithms' characteristics with similar sFFT algorithms using the same flat filter, including sFFT1.0, sFFT2.0 and MPSFT algorithm. At last, we compare the algorithm with other types of sFFT algorithms, including fftw, AAFFT, sFFT-DT, R-FFAST algorithm. All experiments are run on a Linux CentOS computer with 4 Intel(R) Core(TM) i5 CPU and 8 GB of RAM. 

\subsection{Experimental Setup}
In the experiment, the test signals are gained in a way that $K$ frequencies are randomly selected from $N$ frequencies and assigned a magnitude of 1 and a uniformly random phase. The rest frequencies are set to zero in the exact case or combined with additive white Gaussian noise in the general case, whose variance varies depending on the SNR required. The parameters of these algorithms are chosen so that they can make a balance between time efficiency and robustness. The general sparse case means SNR=20db. The new testing platform is developed from the old platform\footnote{https://github.com/ludwigschmidt/sft-experiments}. The detail of codes, data, report are all open sources\footnote{https://github.com/zkjiang/-/tree/master/docs}.

\subsection{The experiments with different parameters}
We plot Fig. \ref{fig6} representing run times vs Signal Size of the sFFT4.0 algorithm for different $l$ with $q=1$ and for different $q$ with $l=16$ in the general sparse case. From Fig. \ref{fig6}, we can see the runtime complexity is in direct ratio to $l$ and in inverse ratio to $q$, whether $N$ is big or small. 
\begin{figure} [H]\centering  
\subfigure[Different $l$ with $q=1$.] { 
\includegraphics[width=0.45\columnwidth]{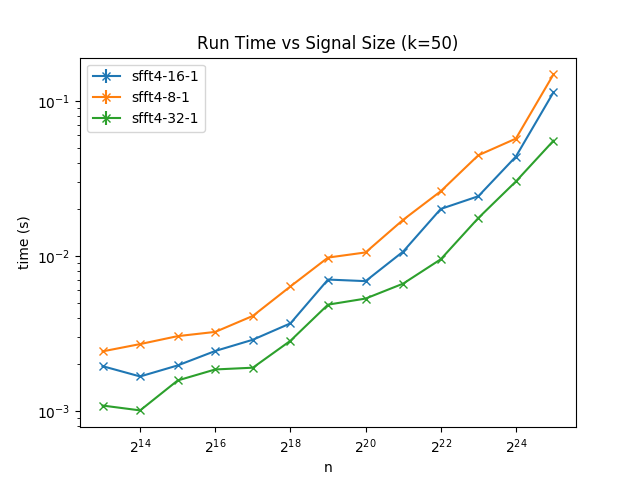}
}   
\subfigure[Different $q$ with $l=16$.] { 
\includegraphics[width=0.45\columnwidth]{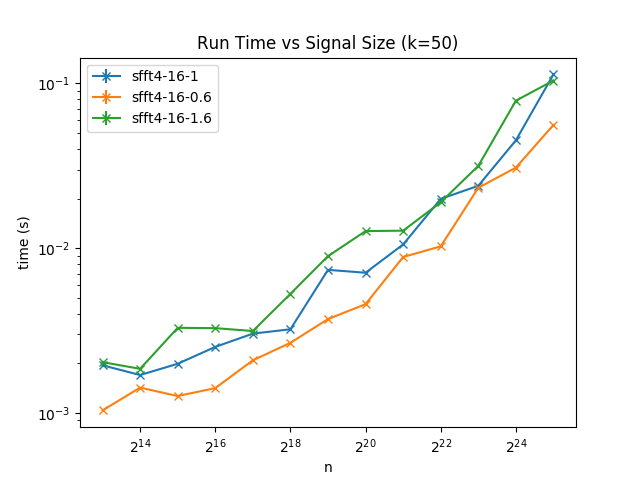}   
}   
\caption{Run time vs Signal Size of the sFFT4.0 algorithm for different parameters.}  \label{fig6} 
\end{figure}

We plot Fig. \ref{fig7} representing $L1$-error vs SNR of the sFFT4.0 algorithm for different $l$ with $q=1$ and for different $q$ with $l=16$. From Fig. \ref{fig7}, we can see the robustness is in direct ratio to $q$ and in inverse ratio to $l$, whether SNR is big or small. Considering comprehensively, we choose $l=16, q=1$ as the parameter of the algorithm, because it can make the algorithm have certain robustness and good runtime and sampling complexity.
\begin{figure} [H]\centering  
\subfigure[Different $l$ with $q=1$.] { 
\includegraphics[width=0.4\columnwidth]{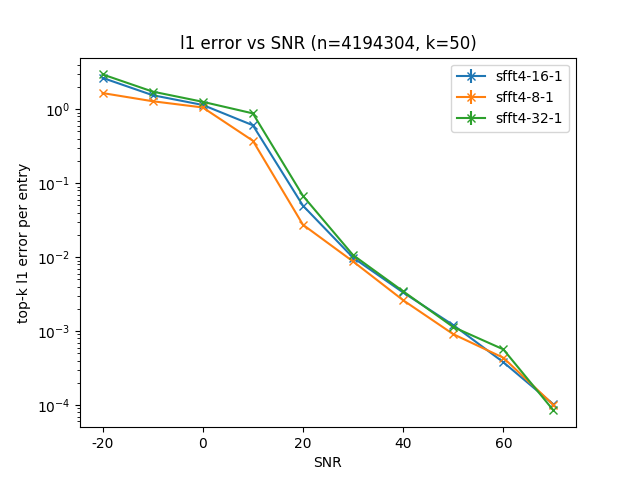}
}   
\subfigure[Different $q$ with $l=16$.] { 
\includegraphics[width=0.4\columnwidth]{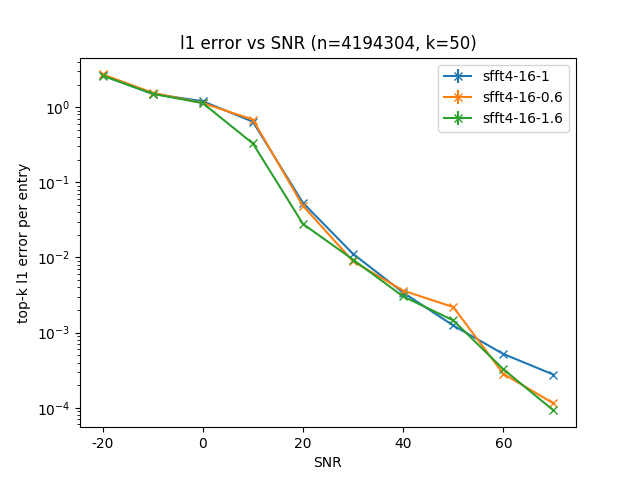}   
}   
\caption{$L1$-error vs SNR of the sFFT4.0 algorithm for different parameters.}  \label{fig7} 
\end{figure}

\subsection{The experiments with similar sFFT algorithms using the same flat filter}
We plot Fig. \ref{fig8} representing run times vs Signal Size and vs Signal Sparsity for the sFFT1.0, sFFT2.0, sFFT4.0, and MPSFT algorithm in the general sparse case. From Fig. \ref{fig8}, we can see 1)The run time of these four algorithms are approximately linear in the log scale as a function of $N$ and in the standard scale as a function of $K$. 2) Results of ranking the runtime of four algorithms is sFFT2.0 $>$ sFFT4.0 $>$ sFFT1.0 $>$ MPSFT when $N$ is large. 3) Results of ranking the runtime of four algorithms is sFFT4.0 $>$ sFFT2.0 $>$ sFFT1.0 $>$ MPSFT when $K$ is large. In a word, compared with other three algorithms, the sFFT4.0 algorithm has excellent runtime complexity.
\begin{figure} [H]\centering  
\subfigure[Run time vs signal size. ] { 
\includegraphics[width=0.4\columnwidth]{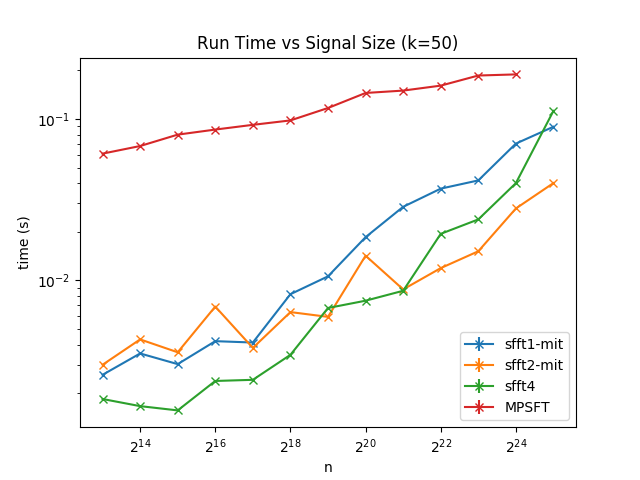}
}   
\subfigure[Run time vs signal sparsity.] { 
\includegraphics[width=0.4\columnwidth]{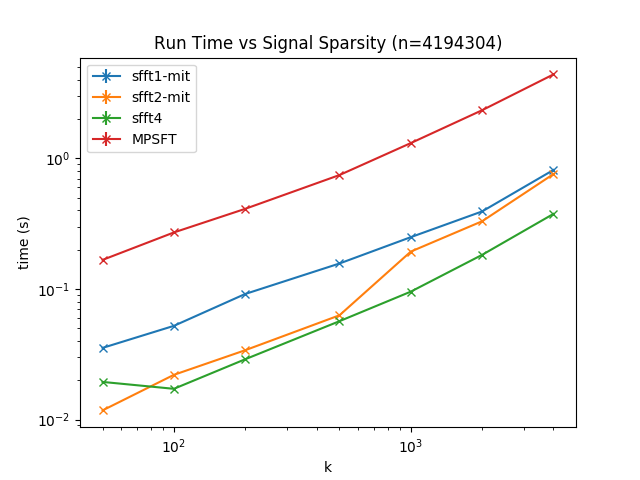}   
}   
\caption{Runtime of the sFFT4.0 algorithm and three similar algorithms in the general sparse case.}  \label{fig8} 
\end{figure}

We plot Fig. \ref{fig9} representing the percentage of the signal sampled vs signal size and vs signal sparsity for the sFFT1.0, sFFT2.0, sFFT4.0, and MPSFT algorithm in the general sparse case. From Fig. \ref{fig9}, we can see 1)The percentage of the signal sampled of these four algorithms are approximately linear in the log scale as a function of $N$ and in the standard scale as a function of $K$. 2)Results of ranking the sampling complexity of four algorithms is sFFT4.0 $>$ MPSFT $>$ sFFT2.0 $>$ sFFT1.0 because of different rounds in the algorithm. So compared with other three algorithms, the sFFT4.0 algorithm has excellent sampling complexity.
\begin{figure} [H]\centering  
\subfigure[Percentage of the signal sampled vs signal size. ] { 
\includegraphics[width=0.4\columnwidth]{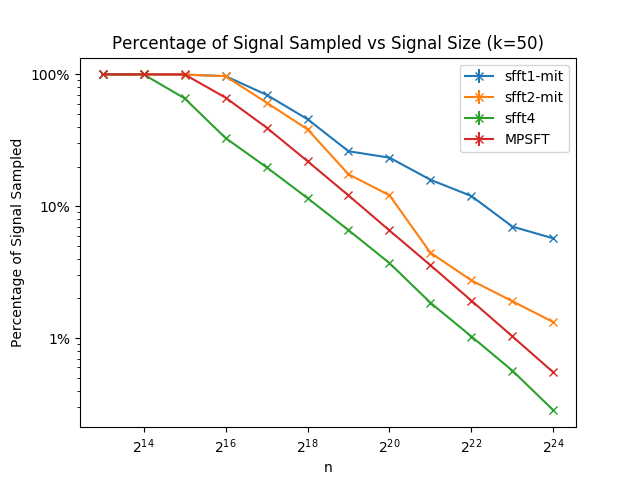}
}   
\subfigure[Percentage of the signal sampled vs sparsity.] { 
\includegraphics[width=0.4\columnwidth]{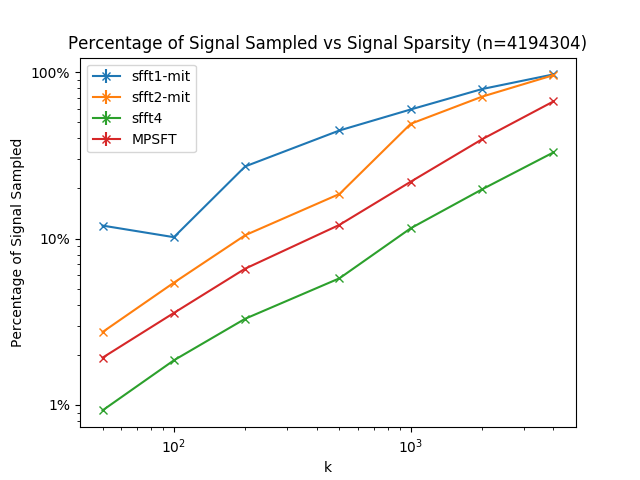}   
}   
\caption{Percentage of the signal sampled of the sFFT4.0 algorithm and three similar algorithms in the general sparse case.}  \label{fig9} 
\end{figure}

We plot Fig. \ref{fig10} representing the runtime and $L1$-error vs SNR for the sFFT1.0, sFFT2.0, sFFT4.0, and MPSFT algorithm. From Fig. \ref{fig10}, we can see 1)The runtime is approximately equal vs SNR. 2)To a certain extent, these four algorithms are all robust. 3)Results of ranking the robustness of four algorithms is MPSFT $>$ sFFT1.0 $>$ sFFT2.0 $>$ sFFT4.0. In a word, the sFFT4.0 algorithm has some robustness, but it is worse than other three algorithms. 
\begin{figure} [H]\centering  
\subfigure[Runtime vs SNR.] { 
\includegraphics[width=0.4\columnwidth]{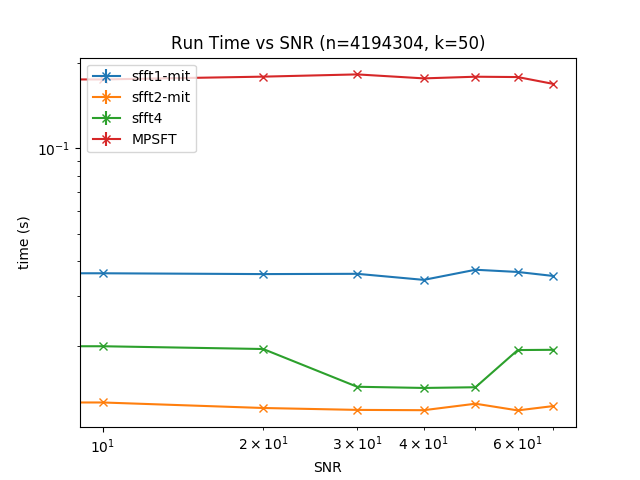}
}   
\subfigure[$L1$-error vs SNR.] { 
\includegraphics[width=0.4\columnwidth]{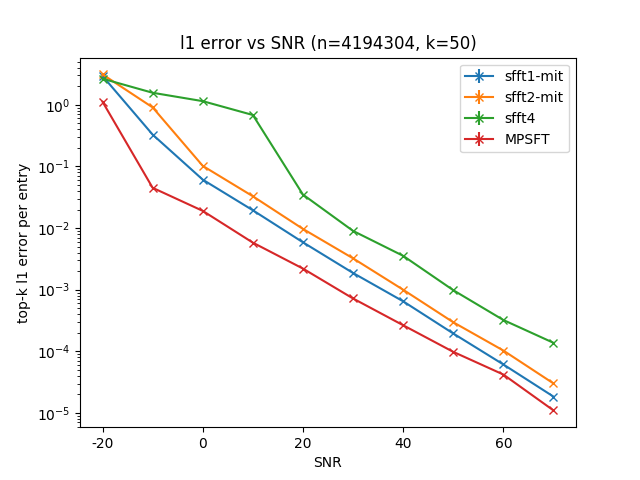}   
}   
\caption{Runtime and $L1$-error of the sFFT4.0 algorithm and three similar algorithms vs SNR.}  \label{fig10} 
\end{figure}

\subsection{The experiments with other types of sFFT algorithms}
We plot Fig. \ref{fig11} representing run times vs signal size and vs signal sparsity for sFFT4.0, AAFFT, R-FFAST, SFFT-DT, and fftw algorithm in the general sparse case. From Fig. \ref{fig11}, we can see 1)These five algorithms are approximately linear in the log scale as a function of $N$ except the fftw algorithm. These five algorithms are approximately linear in the standard scale as a function of $K$ except the fftw and SFFT-DT algorithm. 2)Results of ranking the runtime complexity of these five algorithms is sFFT4.0 $>$ AAFFT $>$ SFFT-DT $>$ fftw $>$ R-FFAST when $N$ is large. 3) Results of ranking the runtime complexity of these five algorithms is fftw $>$ SFFT-DT $>$ sFFT4.0 $>$ AAFFT $>$ R-FFAST when $K$ is large. In a word, compared with other types of algorithms, the sFFT4.0 algorithm has excellent runtime complexity that ten times better than the fftw algorithm.
\begin{figure} [H]\centering  
\subfigure[Run time vs signal size.] { 
\includegraphics[width=0.38\columnwidth]{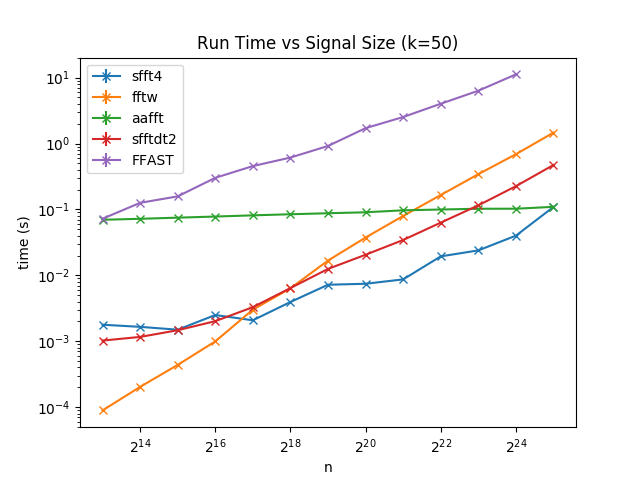}
}   
\subfigure[Run time vs signal sparsity.] { 
\includegraphics[width=0.38\columnwidth]{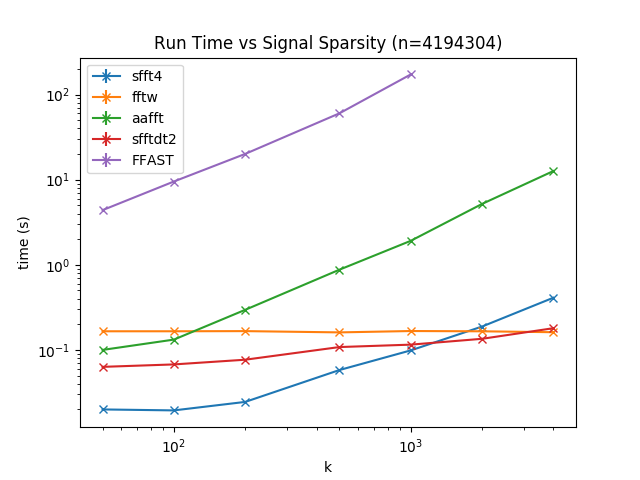}   
}   
\caption{Runtime of the sFFT4.0 algorithm and other four algorithms in the general sparse case.}  \label{fig11} 
\end{figure}

We plot Fig. \ref{fig12} representing the percentage of the signal sampled vs signal size and vs signal sparsity for sFFT4.0, AAFFT, R-FFAST, SFFT-DT and fftw algorithm in the general sparse case. From Fig. \ref{fig12}, we can see 1)These algorithms are approximately linear in the log scale as a function of $N$ except the fftw and SFFT-DT algorithm. These algorithms are approximately linear in the standard scale as a function of $K$ except the R-FFAST and SFFT-DT algorithm. 2)Results of ranking the sampling complexity of these five algorithms is R-FFAST $>$ sFFT4.0 $>$ AAFFT $>$ SFFT-DT $>$ fftw when $N$ is large. 3)Results of ranking the sampling complexity is SFFT-DT $>$ sFFT4.0 $>$ AAFFT $>$ fftw $>$ R-FFAST when $K$ is large. In a word, compared with other types of algorithms, the sFFT4.0 algorithm has excellent sampling complexity that one hundred times better than the fftw algorithm.
\begin{figure} [H]\centering  
\subfigure[Percentage of the signal sampled vs signal size.] { 
\includegraphics[width=0.38\columnwidth]{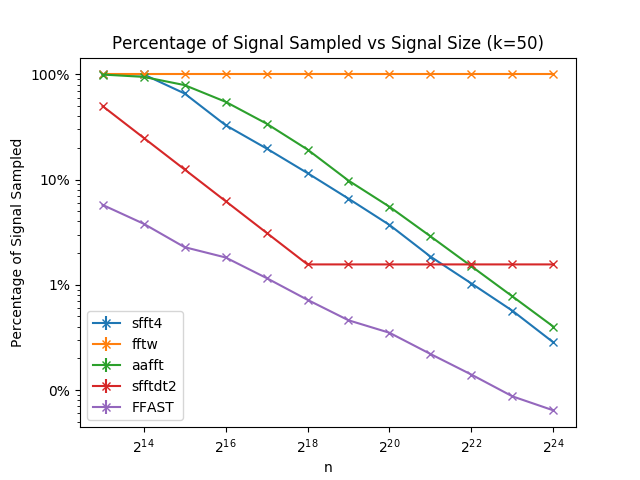}
}   
\subfigure[Percentage of the signal sampled vs signal sparsity.] { 
\includegraphics[width=0.38\columnwidth]{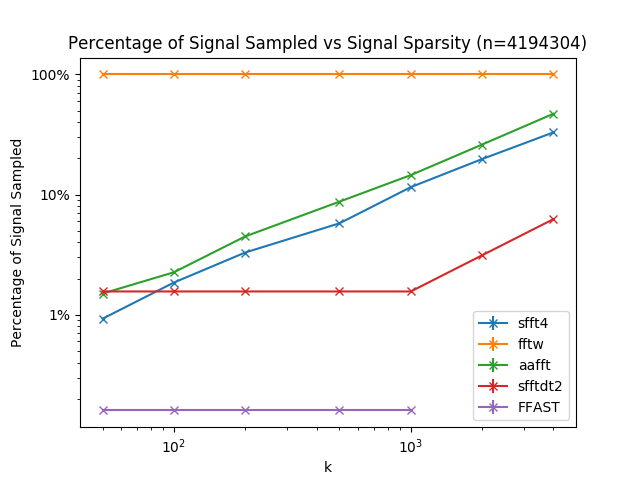}   
}   
\caption{Percentage of the signal sampled of the sFFT4.0 algorithm and other four algorithms in the general sparse case.}  \label{fig12} 
\end{figure}

We plot Fig. \ref{fig13} representing runtime and $L1$-error vs SNR for the sFFT4.0, AAFFT, SFFT-DT and fftw algorithm. From Fig. \ref{fig13} we can see 1)The runtime is approximately equal vs SNR. 2)To a certain extent, these five algorithms are all robust. 3)Results of ranking the robustness of four algorithms is fftw $>$ R-FFAST $>$ SFFT-DT $>$ AAFFT $>$ sFFT4.0. In a word, the sFFT4.0 algorithm has some robustness, but it is worse than other four algorithms. And only when SNR is bigger than 10db, the sFFT4.0 algorithm can deal with the noise interference.
\begin{figure} [H]\centering  
\subfigure[Runtime vs SNR.] { 
\includegraphics[width=0.4\columnwidth]{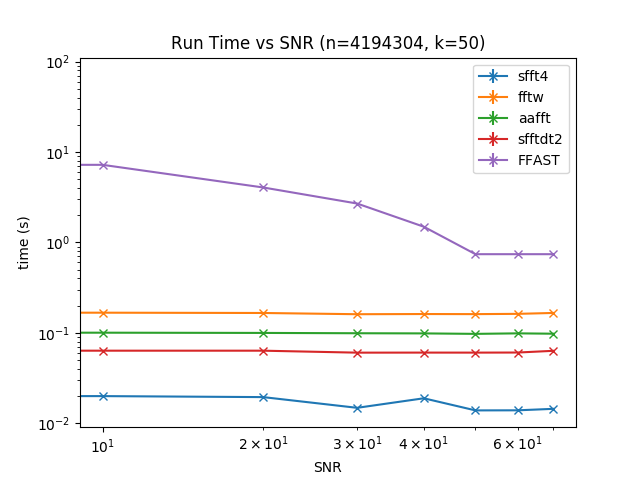}
}   
\subfigure[$L1$-error vs SNR.] { 
\includegraphics[width=0.4\columnwidth]{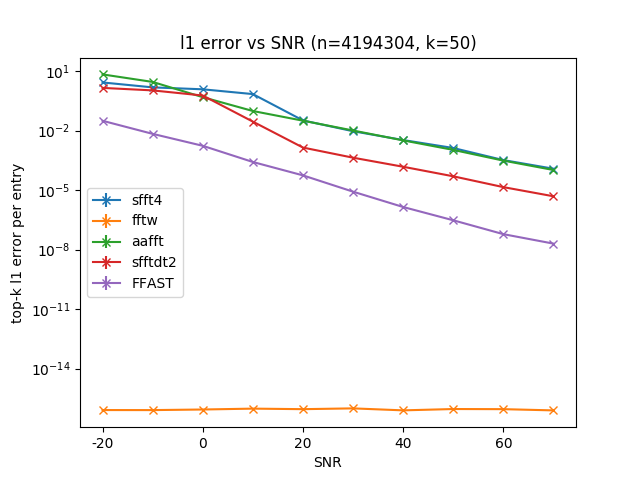}   
}   
\caption{Runtime and $L1$-error of the sFFT4.0 algorithm and other four algorithms vs SNR.}  \label{fig13} 
\end{figure}

\section{Conclusion}
In the first part, the paper provides a brief overview of the techniques used in sFFT algorithms including random spectrum permutation, window function and frequency bucketization. In the second part, we analyze the multiscale Sparse Fast Fourier Transform Algorithm(sFFT4.0 algorithm) in detail from four aspects: the overall flow of the algorithm, the steps of one iteration, the performance of the algorithm in theory, the comparison with other algorithms in theory. We get the conclusion of the performance of the sFFT4.0 algorithms and other comparison algorithms, including runtime complexity, sampling complexity and robustness in theory in Table 1. In the third part, we make three types of experiments for computing the signals of different SNR, different $N$, and different $K$ by a standard testing platform through the sFFT4.0 algorithm with different parameters, three similar algorithms, other four different algorithms and record the runtime, the percentage of the signal sampled and $L0, L1, L2$ error in every in the general sparse case. The analysis of the experiments satisfies theoretical inference.

The main contribution of this paper is 1)The sFFT4.0 algorithm using the multiscale approach method is analyzed in detail and implemented completely. 2)Develop a standard testing platform which can test more than eight typical sFFT algorithms under all kinds of signal on the basis of the old platform. 3)Get a conclusion of the character and performance of the sFFT4.0 algorithm in theory and practice. It has excellent runtime and sampling complexity that ten to one hundred times better than the fftw algorithm, although the robustness of the algorithm is medium.


%
 \section*{Conflict of interest}
 The authors declare that they have no conflict of interest.

 \section*{Open Access}
 
 This article is distributed under the terms of the Creative Commons Attribution 4.0 Interna-
tional License (http://creativecommons.org/licenses/by/4.0/), which permits unrestricted use, distribution, and reproduction in any medium, provided you give appropriate credit to the original author(s) and the source, provide a link to the Creative Commons license, and indicate if changes were made.


\bibliographystyle{IEEEtran}
\bibliography{IEEEabrv,IEEEexample}

\end{document}